\newtheorem{thm}{Theorem}
\def\Real{\mathop{\mathbb{R}}\nolimits}
\def\argmin{\mathop{\rm argmin}\nolimits}
\newcommand{\ba}{\boldsymbol{a}}
\newcommand{\bb}{\boldsymbol{b}}
\newcommand{\bc}{\boldsymbol{c}}
\newcommand{\br}{\boldsymbol{r}}
\newcommand{\bu}{\boldsymbol{u}}
\newcommand{\bw}{\boldsymbol{w}}
\newcommand{\bx}{\boldsymbol{x}}
\newcommand{\by}{\boldsymbol{y}}
\newcommand{\bz}{\boldsymbol{z}}
\newcommand{\bmu}{\boldsymbol{\mu}}
\newcommand{\bM}{\boldsymbol{M}}
\newcommand{\bU}{\boldsymbol{U}}
\newcommand{\bW}{\boldsymbol{W}}
\newcommand{\bX}{\boldsymbol{X}}
\newcommand{\bbeta}{\beta}
\newcommand{\A}{\mathcal{A}}
\newcommand{\sP}{\mathbbm{P}}
\newcommand{\Pn}{\mathbbm{P}_n}
\newcommand{\one}{\mathbbm{1}}
\title{Principal Ellipsoid Analysis (PEA): Efficient non-linear dimension reduction \& clustering}
\author[1]{Debolina Paul}
\author[2]{Saptarshi Chakraborty}
\author[3]{Didong Li}
\author[3,4]{David Dunson\thanks{correspondence to: dunson@duke.edu}}
\affil[1]{Indian Statistical Institute, Kolkata}
\affil[2]{Department of Statistics, University of California, Berkeley}
\affil[3]{Department of Mathematics, Duke University}
 \affil[4]{Department of Statistical Science, Duke University}
\date{ }
\begin{document}
\maketitle
\begin{center}

\normalsize
\end{center}

\begin{abstract}
Even with the rise in popularity of over-parameterized models, simple dimensionality reduction and clustering methods, such as PCA and k-means, are still routinely used in an amazing variety of settings.  A primary reason is the combination of simplicity, interpretability and computational efficiency.  The focus of this article is on improving upon PCA and k-means, by allowing non-linear relations in the data and more flexible cluster shapes, without sacrificing the key advantages.  The key contribution is a new framework for Principal Elliptical Analysis (PEA), defining a simple and computationally efficient alternative to PCA that fits the best elliptical approximation through the data.  We provide theoretical guarantees on the proposed PEA algorithm using Vapnik-Chervonenkis (VC) theory to show strong consistency and uniform concentration bounds.  Toy experiments illustrate the performance of PEA, and the ability to adapt to non-linear structure and complex cluster shapes.  In a rich variety of real data clustering applications, PEA is shown to do as well as k-means for simple datasets, while dramatically improving performance in more complex settings.
\end{abstract}
{\bf Keywords:} Nonlinear Dimensionality Reduction, Manifold Learning, Clustering, Statistical Learning Theory.
\section{Introduction}
Clustering of data into groups of relatively similar observations is one of the canonical tasks in unsupervised learning.  With an increasing focus in recent years on very richly parameterized models,
there has been a corresponding emphasis in the literature on complex clustering algorithms.  A popular theme has been on clustering on the latent variable level, while allowing estimation of both the clustering structure and a complex nonlinear mapping from the latent to observed data level.  Such methods are appealing in
being able to realistically generate data that are indistinguishable from the observed data, while clustering observations in a lower-dimensional space.  

A particularly popular strategy is to develop clustering algorithms based on variational autoencoders (VAEs).  For example, instead of drawing the latent variables in a VAE from standard Gaussian distributions, one can use a mixture of Gaussians for model-based clustering \citep{dilokthanakul2016deep,lim2020deep,yang2019deep}.
The problem with this family of methods is that, with a rich enough deep neural network, VAEs can accurately approximate any data generating distribution regardless of the continuous density placed on the latent variables.  If one uses a richer family of densities, such as a mixture model, then one can potentially approximate the data distribution using a simpler neural network structure.  However, the inferred clusters are not reliable due to problems of non-identifiability.  There are no guarantees that observations in the same cluster on the latent variable level are also close on the observed data level.  The clustering solutions are brittle and lack interpretability.  

A major focus of clustering is on obtaining interpretable groups in the data.  For this reason, in sharp contrast to the increasing focus in predictive contexts on highly complex and heavily parameterized models, in conducting clustering most practitioners continue to use very simple algorithms like $k$-means.  The goal of this paper is to define a new class of simple and computationally efficient nonlinear modifications of Principal Components Analysis (PCA), and to use this new class to obtain a competitor to $k$-means for clustering.  We refer to our class of algorithms as Principal Elliptical Analysis (PEA). In contrast to the focus of PCA on linear structure, we use ellipses as a simple geometric object used to approximate low-dimensional structure within each cluster.

A key problem with $k$-means is that the cluster shapes are intrinsically assumed to be spherical; this can lead to breaking up true clusters that have a more complex shape into spherical subclusters.  Attempting to allow more complex within-cluster data generating distributions has motivated a huge literature on highly complex approaches.  However, the within-cluster data distribution should not be multimodal or arbitrarily complex, or there is no need to have separate clusters in the data and non-identifiability and/or brittleness problems arise.  Hence, it is a mathematical necessity to restrict cluster shapes in order to allow reliable estimation of clusters.  Our conjecture is that data within a cluster can be accurately characterized as error-prone observations around the surface of an unknown ellipse.  This motivates elliptical modifications of PCA and $k$-means.

There is a rich existing literature developing methods to improve upon $k$-means, addressing well known problems in terms of 
sensitivity to high-number of clusters, inefficacy in high-dimensions and poor performance in the presence of linearly non-separable clusters.  Variants of $k$-means designed to address these problems include (among others) Weighted $k$-means \citep{kerdprasop2005weighted,huang2005automated}, Spherical $k$-means \citep{dhillon2001concept}, Fuzzy $C$-means \citep{dunn1974graph}, $k$-means++ \citep{arthur2007proceedings}, Bisecting $k$-means \citep{steinbach2000comparison}, X-means \citep{pelleg2000x}, G-means \citep{hamerly2004learning}, Power $k$-means \citep{xu2019power,chakraborty2020entropy}, Sparse $k$-means \citep{witten2010framework} and optimal trasport-based methods \citep{lacombe2018large,chakraborty2020hierarchical}. Recent development in this area include sparse subspace clustering methods \citep{SSC,8447505,yang2016ell}, which are known to be efficient for datasets that lie on a union of subspaces. This paradigm uses the fact that each data point can be written as an affine combination of all other points and thus, searching for the sparsest combination, automatically gives rise to  a representation of points lying in the same subspace. However, these variations either fail to capture non-linear structure of the data or are computationally complex, lacking interpretability and reproducibility.


There is also a rich literature generalizing PCA to allow non-linear structure including Curvilinear PCA \citep{demartines1997curvilinear}, Spherical PCA \citep{li2017efficient}, Principal Curves \citep{hastie1989principal}, Principal Manifolds \citep{gorban2007principal}, Generalized PCA \citep{vidal2005generalized}, Kernel PCA \citep{scholkopf1998nonlinear}, etc. Another class of methods, which includes the popular $t$-SNE approach \citep{maaten2008visualizing}, ISOMAP \citep{tenenbaum2000global}, Locally Linear Embedding (LLE) \citep{roweis2000nonlinear}, Laplacian Eigenmaps \citep{belkin2002laplacian}, Diffusion Maps \citep{coifman2006diffusion} and Autoencoders \citep{bengio2006nonlocal}, tries to reconstruct the data, while preserving some of its properties, such as distances between two points or the local linear structure of the dataset. Although excellent for data visualization, these are often computationally expensive and the new features often carry no interpretation in the original feature space.

Our Principal Elliptical Analysis (PEA) class of methods improves upon the above literature through a conceptually and computationally simple class of geometrically-based optimization algorithms. Data within each cluster are assumed to fall close to a local region of the surface of an ellipsoid.
The objective function is minimized via a simple block-coordinate descent algorithm. We provide uniform concentration bounds and assert the strong consistency of the obtained empirical solutions under some standard regularity conditions. The theoretical analysis is conducted appealing to Vapnik-Chervonenkis (VC) theory \citep{devroye2013probabilistic}, which although extensively used in  classification contexts, remains largely unexplored in finding empirical risks in unsupervised learning problems.

The rest of the paper is organized as follows. After discussing a motivating example in Section \ref{motiv}, we formulate Principal Ellipsoid Analysis (PEA) in Section \ref{problem formulation}. We discuss the block-coordinate descent algorithm to minimize the proposed objective function in Section \ref{optimization}. Section \ref{theo} provides uniform concentration bounds and strong consistency properties of the global optimal solutions. Detailed experimental analysis on a suite of real datasets is provided in Section \ref{expo}, followed by a brief discussion in Section \ref{con}. Codes for our algorithms are available at \url{https://github.com/DebolinaPaul/PEA}.


\section{Motivating Example}\label{motiv}
This section provides a motivating example demonstrating the efficacy of the proposed method. We simulated a dataset, which contains $200$ points, $100$ points in each cluster, in $2$ dimensions (for ease in visualization). Each of the clusters is generated from part of the ellipse, $\frac{x^2}{4^2}+\frac{y^2}{3^2}=1$. To generate the first cluster from the ellipse, we consider its parametric equation,
\begin{equation}\label{data_gen}
    x=4\cos\theta, \quad y=3\sin\theta.
\end{equation}
We sample $\theta \sim Unif(\frac{\pi}{4},\frac{3\pi}{4})$ and obtain the data points using equation \eqref{data_gen} and adding a $\mathcal{N}(0,0.3)$ noise to each coordinate. The second cluster is simulated in a similar manner but with the second coordinate shifted by $2$ units in the positive $y$ direction. The entire dataset is then centered and scaled (i.e. $z$-transformed). We run classical $k$-means \citep{macqueen1967some}, Weighted $k$-means \citep{huang2005automated}, sparse $k$-means \citep{witten2010framework}, and IF-HCT-PCA \citep{jin2016influential} on the dataset. For fair comparison, we also run $k$-means after influential feature selection \citep{jin2016influential}, which we call IF-HCT $k$-means. The scatterplots of the dataset color-coded with partitions obtained from different algorithms are illustrated in Figure \ref{m}. It is easily observed that among all its competitors, only the proposed PEA Clustering successfully partitions the two clusters.
\begin{figure*}[ht]
    \centering
    \begin{subfigure}[t]{0.3\textwidth}
    \centering
        \includegraphics[height=0.8\textwidth,width=0.9\textwidth]{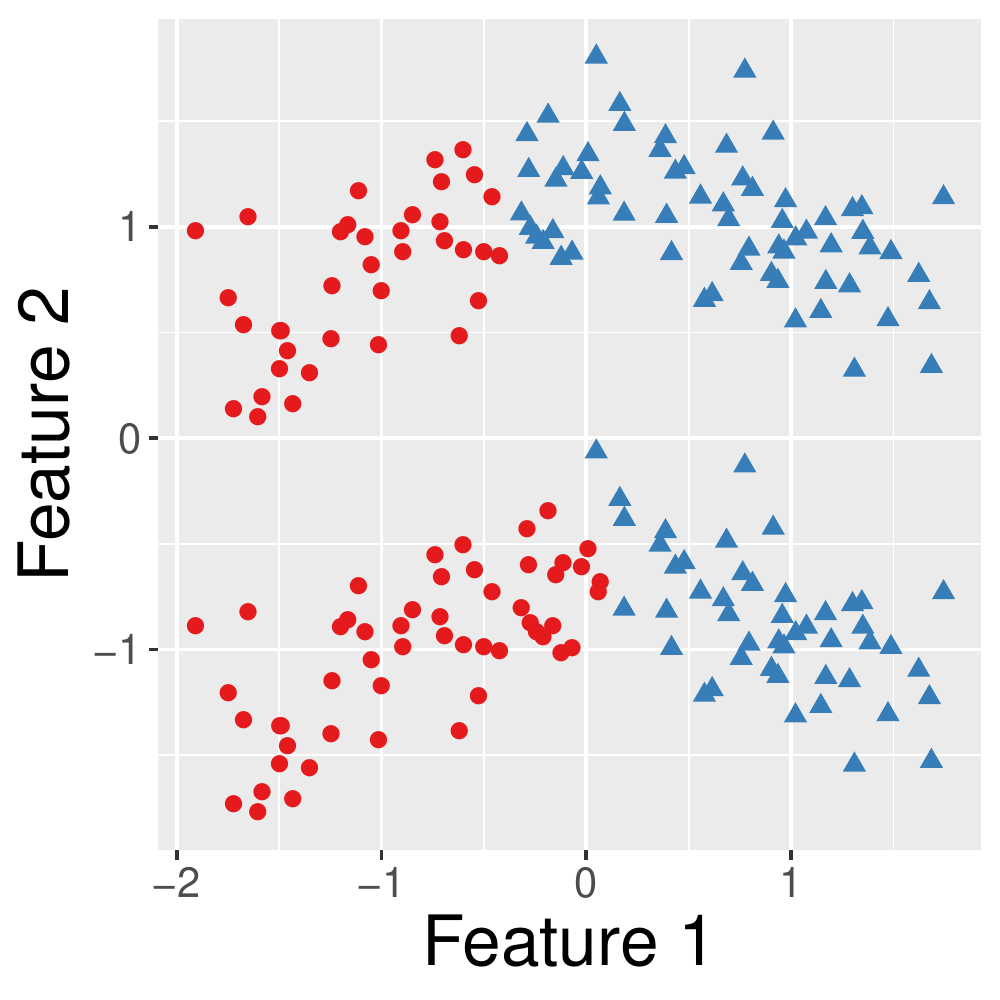}
        \caption{$k$-Means}
        \label{m1}
    \end{subfigure}
    ~
    \begin{subfigure}[t]{0.3\textwidth}
    \centering
        \includegraphics[height=0.8\textwidth,width=0.9\textwidth]{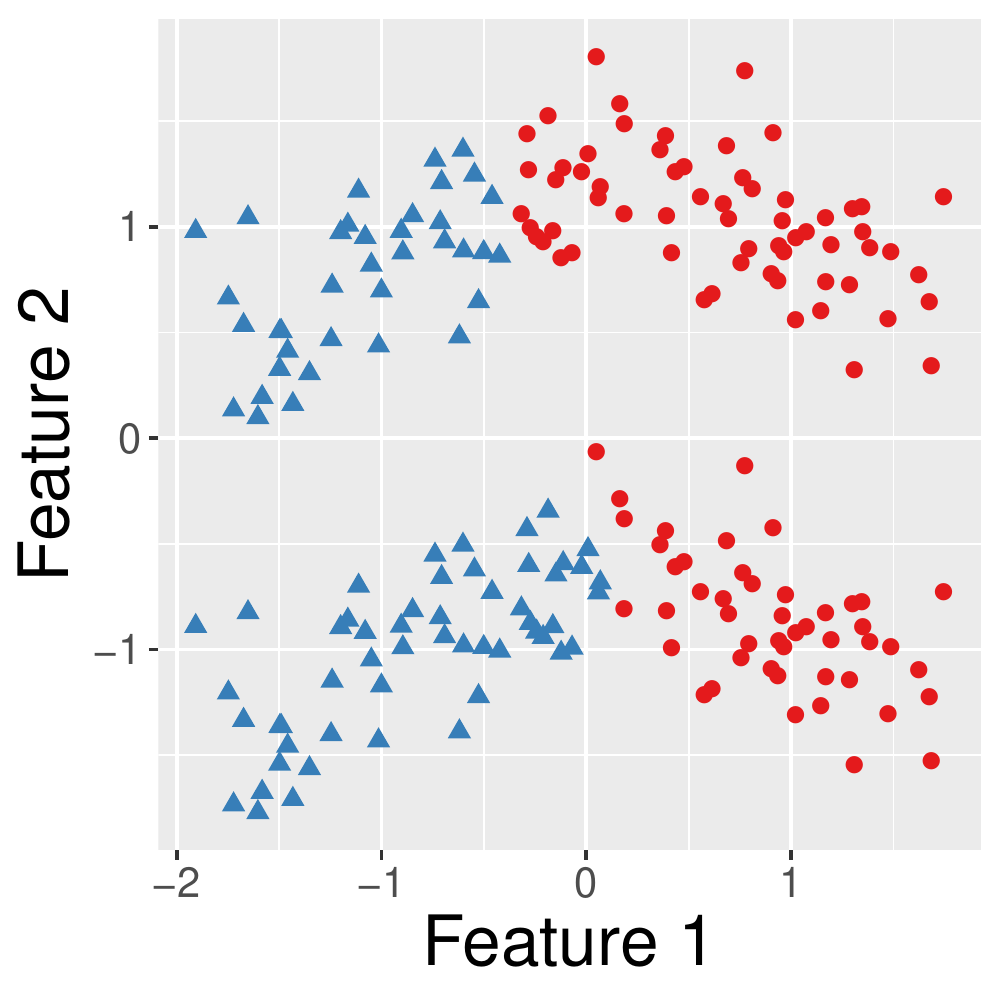}
        \caption{IF-HCT $k$-Means}
        \label{m2}
    \end{subfigure}
    ~
    \begin{subfigure}[t]{0.3\textwidth}
    \centering
        \includegraphics[height=0.8\textwidth,width=0.9\textwidth]{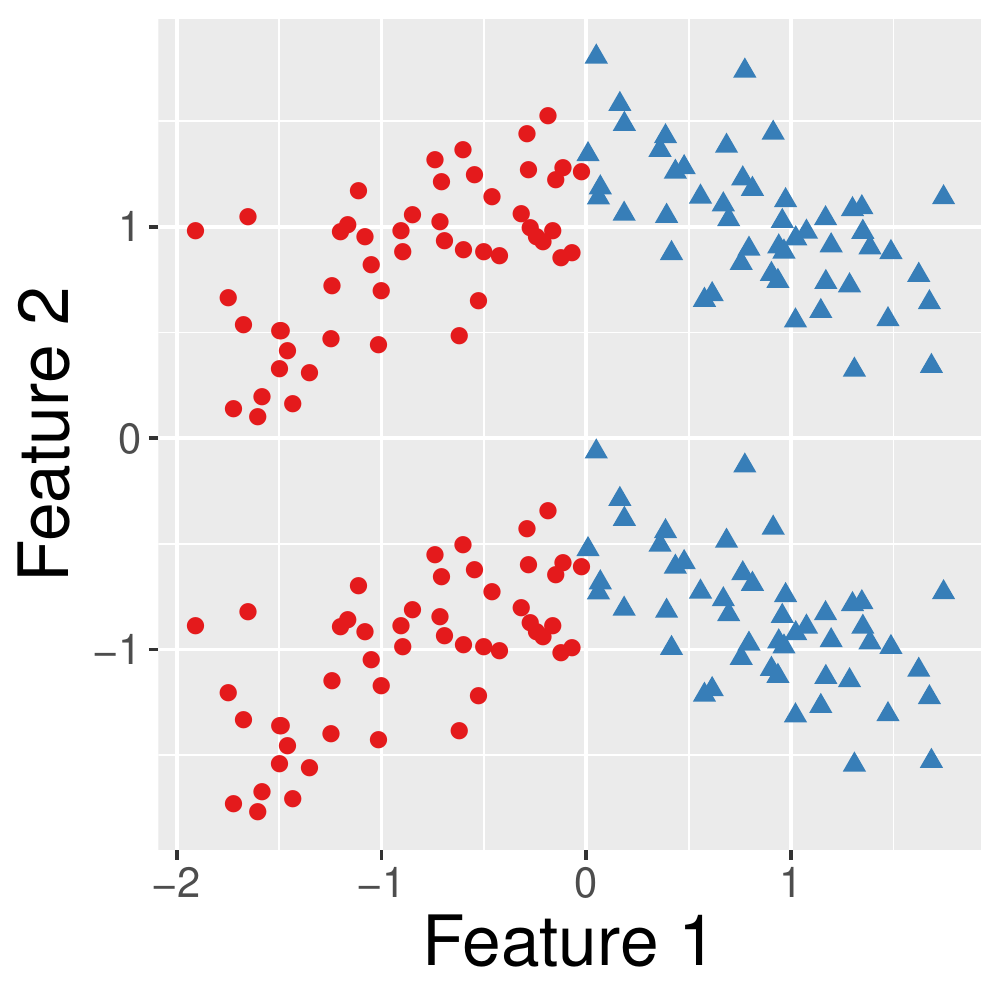}
        \caption{$Wk$-Means}
        \label{m3}
    \end{subfigure}
    ~
    \begin{subfigure}[t]{0.3\textwidth}
    \centering
        \includegraphics[height=0.8\textwidth,width=0.9\textwidth]{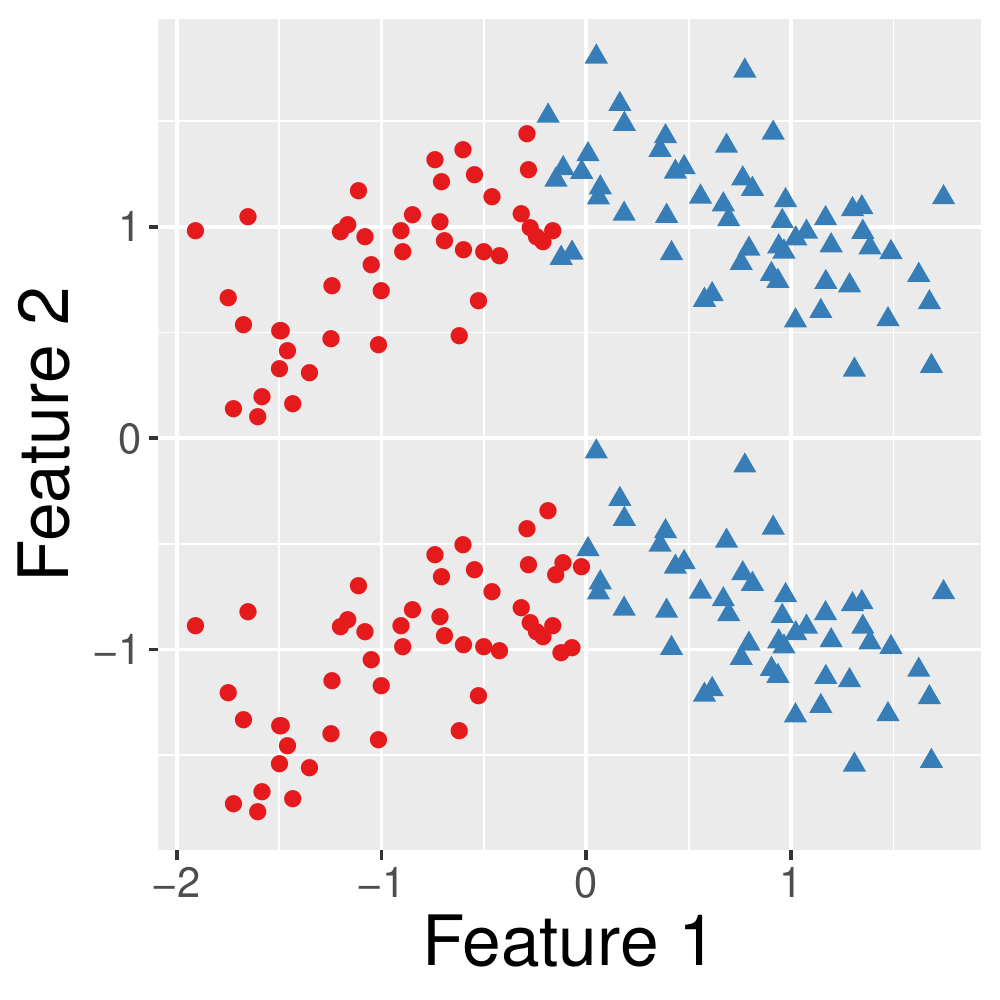}
        \caption{Sparse $k$-Means}
        \label{m4}
    \end{subfigure}
    ~
    \begin{subfigure}[t]{0.3\textwidth}
    \centering
        \includegraphics[height=0.8\textwidth,width=0.9\textwidth]{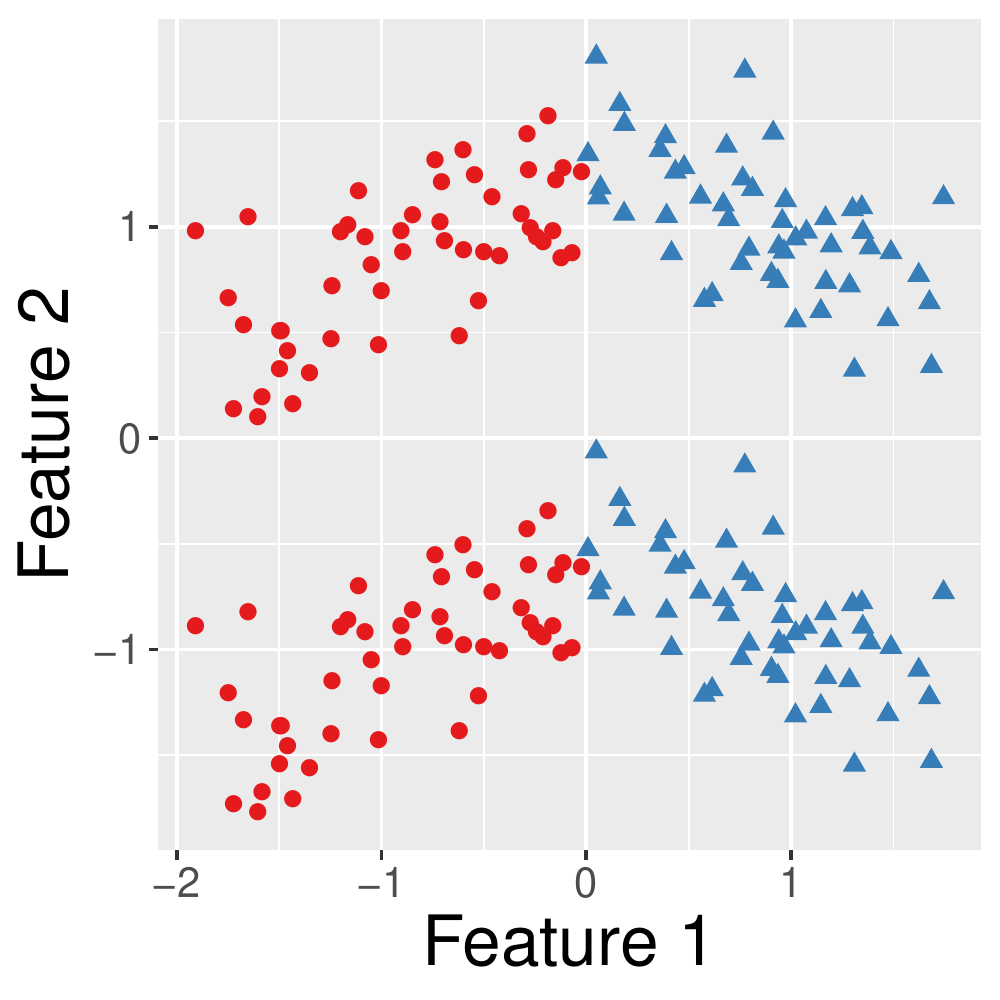}
        \caption{IF-HCT PCA}
        \label{m5}
    \end{subfigure}
    ~
    \begin{subfigure}[t]{0.3\textwidth}
    \centering
        \includegraphics[height=0.8\textwidth,width=0.9\textwidth]{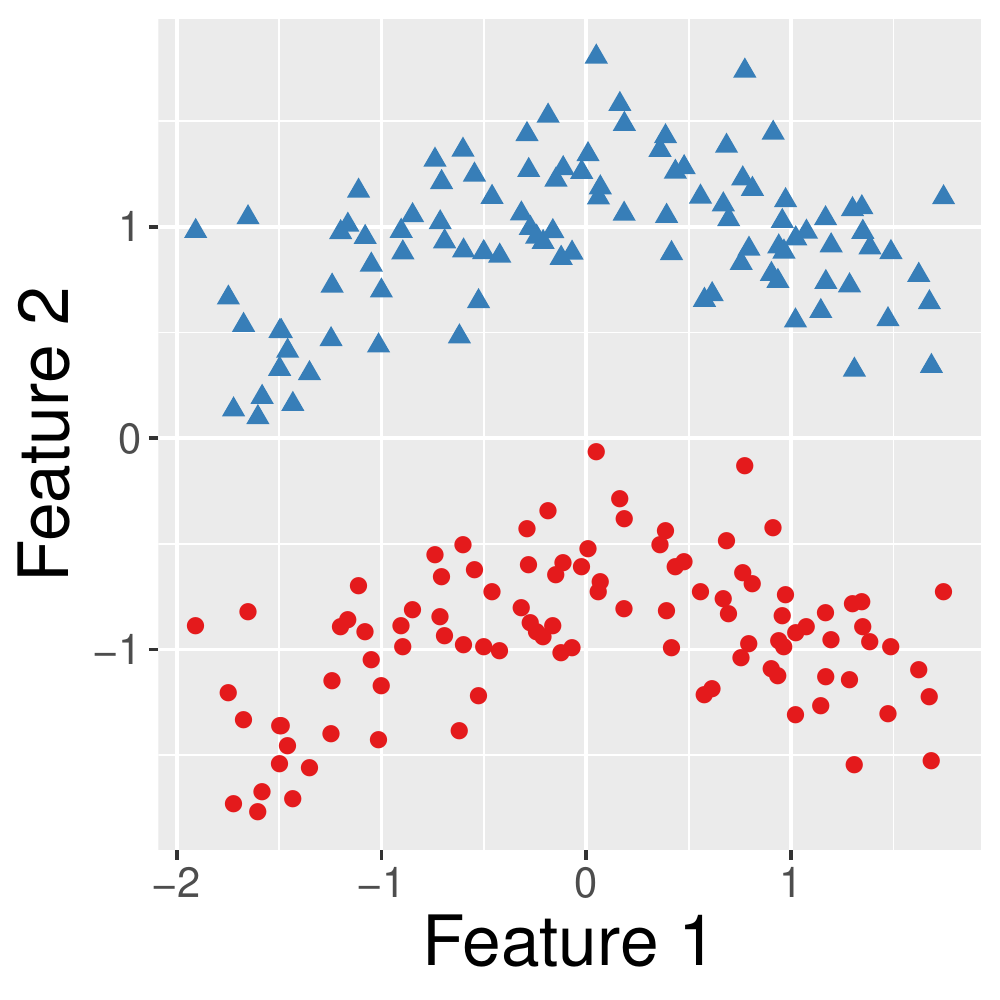}
        \caption{PEA Clustering}
        \label{m6}
    \end{subfigure}
    \caption{ Performance of different peer algorithms on the motivating example, showing the efficacy of PEA clustering. }
    \label{m}
\end{figure*}

\section{Principal Ellipsoid Analysis}
In this section, we formulate the proposed objective function for Principal Ellipsoid Analysis (PEA) and derive a coordinate-descent algorithm to minimize the PEA objective. 
\subsection{Problem Formulation}\label{problem formulation}
Let $\mathcal{X}=\{\bx_1,\dots,\bx_n\}$ be a data cloud of size $n$ in $\Real^p$. We assume that the points $\bx_1,\dots,\bx_n$ lie in the proximity of the surface of an axis parallel ellipse. Let the center of this ellipse be $\bmu$ and the length of the $p$ axes of the ellipse be $r_1,\dots,r_p$, respectively. In order to avoid trivial ellipsoids, we will assume that $0< \frac{1}{\Lambda} \le r_l \le \frac{1}{\lambda}$, for all $l =1,\dots,p$. One can expect that $\bx_i$ is generated by adding a random noise to the vector $\bmu + \br \circ \bu_i$, where $\|\bu_i\|_2=1$, $\br=(r_1,\dots,r_p)^\top$ and $\ba\circ \bb$ denotes the Hadamard product between $\ba$ and $\bb$. If we take $w_l=\frac{1}{r_l}$, $l=1,\dots,p$ and write $\bw=(w_1,\dots,w_p)^\top$, then $\bw \circ (\bx_i-\bmu) \approx \bu_i$. Here $\lambda \le w_l \le \Lambda$, for all $l=1,\dots,p$. We propose to estimate $\bw,\bmu$ and $\bu_i$ by minimizing the sum of squares of this error, which is given by, 
\begin{equation}\label{eqq}
    f(\bmu, \bw, \bU) = \sum_{i=1}^n \|\bw \circ (\bx_i-\bmu) - \bu_i\|_2^2,
\end{equation}
where $\bU_{n \times p}=[\bu_1:\dots:\bu_n ]^\top$. The objective function \eqref{eqq} is to be minimized subject to the constraint that $\bw \in \mathcal{C}$, where $\mathcal{C}=[\lambda,\Lambda]^p$ and $\bu_i^\top \bu_i=1$, for all $i=1,\dots,n$. We thus propose Principal Elliptical Analysis (PEA) as the minimization of the following optimization problem:
\begin{equation}
\label{e2}
\min_{\bmu, \bw, \bU}\sum_{i=1}^n \|\bw \circ (\bx_i-\bmu) - \bu_i\|_2^2
\end{equation}
subject to $\bw \in \mathcal{C}$ and  $\bu_i^\top \bu_i=1,$ for $i=1,\dots,n.$

\subsection{Optimization}\label{optimization}
In this section, we will discuss how to solve problem \eqref{e2}. We will employ a simple block-coordinate descent algorithm, which has guaranteed convergence and local optimality properties \citep{tseng2001convergence}. To obtain the coordinate descent steps, we need to solve the following three optimization problems.
\begin{enumerate}
    \item[P1] Fix, $\bw,\bmu$ and minimize $\sum_{i=1}^n \|\bw \circ (\bx_i-\bmu) - \bu_i\|_2^2, \hspace{5pt} \text{subject to  } \bu_i^\top \bu_i=1, \text{ } \forall i=1,\dots,n$, with respect to (w.r.t.) $\bU$.
    \item[P2] Fix, $\bw,\bU$ and minimize $\sum_{i=1}^n \|\bw \circ (\bx_i-\bmu) - \bu_i\|_2^2$, w.r.t. $\bmu$.
    \item[P3] Fix, $\bmu,\bU$ and minimize $\sum_{i=1}^n \|\bw \circ (\bx_i-\bmu) - \bu_i\|_2^2$, w.r.t. $\bw \in \mathcal{C}$.
\end{enumerate}
The problems P1-P3 admit closed form solutions. To solve P1, we need to solve the following optimization problem for each $i$:
\[\min_{\bu_i} \|\bw \circ (\bx_i-\bmu) - \bu_i\|_2^2, \text{ subject to } \|\bu_i\|^2_2=1. \]
The Lagrangian for this problem is given by 
\[\mathcal{L} = \|\bw \circ (\bx_i-\bmu) - \bu_i\|_2^2-\alpha(\|\bu_i\|^2_2-1)\]
Equation $\nabla_{\bu_i} \mathcal{L} = 0$ and enforcing the constraint, $\|\bu_i\|^2_2=1,$ it is easy to see that the solution of problem P1 is given by $\bu_i^\ast=\frac{\bw \circ(\bx_i-\bmu)}{\|\bw \circ(\bx_i-\bmu)\|_2}$. Problem P2 is solved by taking \[\mu_l^\ast=\frac{1}{n}\sum_{i=1}^n x_{il} -\frac{1}{n w_l} \sum_{i=1}^n u_{il},\] for $l=1,\dots,p$. Lastly, problem P3 can be solved by taking
$w_l^\ast= T\bigg(\frac{\sum_{i=1}^n u_{il}(x_{il}-\mu_l)}{\sum_{i=1}^n (x_{il}-\mu_l)^2},\lambda,\Lambda\bigg)$, for all $l=1,\dots,p$. Here,
\[T(x,\lambda,\Lambda)=\begin{cases}
\lambda & \text{if } x< \lambda\\
 \Lambda & \text{if } x>\Lambda\\
 x & \text{otherwise.}
\end{cases}
\]
 Algorithm \ref{algo} gives a formal description of the proposed coordinate descent algorithm, which we will call the Principal Elliptical Analysis (PEA) algorithm. One should note that this block-coordinate descent algorithm has guaranteed convergence properties \citep{tseng2001convergence}. 
\begin{algorithm*}
\caption{Principal Elliptical Analysis (PEA) Algorithm}\label{algo}
\begin{algorithmic}
\State \textbf{Input}:  $\bX \in \Real^{n \times p}$, $\lambda$, $\Lambda$, $\bw^{(0)}$, $\bmu^{(0)}$.
\State \textbf{Output}:  $\bw,\bmu$ and $\bU$. 
\Repeat
 \State \textbf{Step 1}: Update $\bU$ by $\bu_i^{(t+1)}=\frac{\bw^{(t)} \circ(\bx_i-\bmu^{(t)})}{\|\bw^{(t)} \circ(\bx_i-\bmu^{(t)})\|_2}$ for all $i=1,\dots,n$.\\
 \State \textbf{Step 2}: Update $\bmu$ by \[\mu_l^{(t+1)}=\frac{1}{n}\sum_{i=1}^n x_{il} -\frac{1}{n w_l^{(t)}} \sum_{i=1}^n u_{il}^{(t+1)},\] for all $l=1,\dots,p$.
 \State \textbf{Setp 3}: Update $\bw$ by taking \[w_l^{(t+1)}=T\bigg(\frac{\sum_{i=1}^n u_{il}^{(t+1)}(x_{il}-\mu_l^{(t+1)})}{\sum_{i=1}^n (x_{il}-\mu_l^{(t+1)})^2},\lambda,\Lambda\bigg).\]\\
 \Until{convergence}
\end{algorithmic}
\end{algorithm*}
\section{Theoretical Analysis}\label{theo}
\subsection{A Short Exposition on VC Theory}
In this section, we provide a short introduction to VC theory. Interested readers are referred to the book by \cite{devroye2013probabilistic} for a detailed exposition.

Let $(\Omega,\mathcal{B},P)$ be a probability space and let $\A \subset \mathcal{B}$ be a collection of $P$-measurable sets. In our analysis, we will assume that $\Omega=\Real^p$ and $\mathcal{B}$ is taken as  $\mathscr{B}(\Real^p)$, the Borel $\sigma$-algebra on $\Real^p$. Let $\mathcal{F}$ be a subset of $\Omega$ of size $n$. We say that a set $\A$ picks out $F \subseteq \mathcal{F}$ if there exists $A \in \A$ such that $A \cap \mathcal{F}=F$. Let $S(\A,\mathcal{F})$ be the number of subsets of $\mathcal{F}$ picked out by $\A$, i.e. \(S(\A,\mathcal{F})=|\{A \cap \mathcal{F}: A \in \A\}|\). We say that the set $\mathcal{F}$ is shattered by $\A$ if $S(\A,\mathcal{F})=2^n$, where $n=|\mathcal{F}|$. The shatter coefficient is defined as $S_n(\A)=\sup_{\mathcal{F}: |\mathcal{F}|=n} S(\A,\mathcal{F})$. One should note that $S_n(\A) \le 2^n$. The Vapnik-Chervonenkis (VC) dimension of $\A$, denoted by $V_\A$, is defined as the largest integer $n$, for which $S_n(\A)=2^n$, i.e. $V_{\A}=\sup \{n \in \mathbb{N}: S_n(\A)=2^n\}$.

Let $\bX_1,\dots,\bX_n$ be independent and identically distributed draws from $\sP$ and let $\Pn =\frac{1}{n}\sum_{i=1}^n \delta_{\bX_i}$ be the empirical measure based on the data $\bX_1,\dots,\bX_n$. Here $\delta_{\by}$ denotes the degenerate distribution at $\by$. We note that for any $A \subseteq \Real^p$, $\sP_n(A)=\frac{1}{n} \sum_{i=1}^n \one(\bX_i \in A)$, where $\one(\cdot)$ denotes the indicator function.  
We state the Vapnik Chervonenkis (VC) theorem as follows.

\textbf{Theorem} (\cite{vapnik1971uniform}): Let $\epsilon>0$, $n \epsilon^2 \ge 2$. Then
\[P\bigg(\sup_{A \in \A}|\Pn(A)-\sP(A)|>\epsilon\bigg)\le 8 S_n(A) \exp\bigg\{-\frac{n \epsilon^2}{32}\bigg\}.\]

\subsection{Large Sample Properties of PEA}
In this section, we study the large sample properties of the (global) optimal solutions produced by the Principal Elliptical Analysis (PEA) method. In particular, we will derive a uniform bound by appealing to the powerful Vapnik-Chervonenkis (VC) theory and show consistency of the obtained estimates.

We begin by assuming that $\bX_1,\dots,\bX_n$ are independent and identically distributed according to the distribution $\sP$. We note that for practical purposes, the support of $\sP$ can be assumed to be bounded, i.e. $\exists M>0$ such that $Support(\sP) \subset B(M)$. Here $B(M)=\{\bx \in \Real^p: \|\bx\|_2 \le M\}$ denotes the closed ball of radius $M$, centered at the origin. To simplify the objective function and facilitate the theoretical analysis, we eliminate $\bu_i$ from the objective. This is accomplished through representing $\bu_{i}=\frac{\bw \circ (\bX_i-\bmu)}{\|\bw \circ (\bX_i-\bmu)\|_2}$. Substituting the value of $\bu_i$ in equation \eqref{eqq}, we get \begin{equation}\label{eqq1}
f(\bw,\bmu)=\sum_{i=1}^n (\|\bw \circ (\bX_i-\bmu)\|_2-1)^2    
\end{equation} For notational simplicity, let $\Phi(\bmu,\bw,Q)=\int(\|\bw \circ (\bx-\bmu)\|_2-1)^2 dQ$ for any probability measure $Q$ (whenever $\Phi(\cdot,\cdot,Q)<\infty$). We note that the objective function \eqref{eqq1} can be written as $\Phi(\bmu,\bw,\Pn)$ (after dividing both sides of \eqref{eqq1} by $n$), where $\Pn =\frac{1}{n} \sum_{i=1}^n \delta_{\bX_i}$ is the empirical distribution function. Let $(\bmu_n,\bw_n)$ be the sample minimizers of $\Phi(\cdot,\cdot,\Pn)$, i.e. $\Phi(\bmu_n,\bw_n,\Pn) = \argmin\limits_{\bmu,\bw} \Phi(\bmu,\bw,\Pn)$. Let $(\bmu_0,\bw_0)$ be the population minimizers of $\Phi(\cdot,\cdot,\sP)$, i.e. $\Phi(\bmu_0,\bw_0,\sP) = \argmin\limits_{\bmu,\bw} \Phi(\bmu,\bw,\sP)$. Our goal is to show that $(\bmu_n,\bw_n) \to (\bmu_0,\bw_0)$, almost surely. Towards that, we make the following technical assumptions:
\begin{itemize}
    \item[A1]  The support of $\sP$ is bounded, i.e. $\exists M>0$ such that $Support(\sP) \subset B(M)$.
    \item[A2] For all $\epsilon>0$, there exists $\eta>0$ such that if $\|(\bw,\bmu)^\top - (\bw_0,\bmu_0)^\top\|_2 > \epsilon$ then $\Phi(\bmu,\bw,\sP)>\Phi(\bmu_0,\bw_0,\sP)+\eta$. 
\end{itemize}
The first step towards the consistency result is to bound $(\bmu_n,\bw_n)$ in some compact set $\forall n \in \mathbb{N}$. Since $\bw$ is already bounded in $\mathcal{C}$, it is enough to bound $\bmu_n$ in some compact ball. This is achieved in Theorem \ref{thm1}. Due to space restrictions, the proof is provided in the supplementary document.
 \begin{thm}\label{thm1}
 There exists $K>0$ such that $\bmu_n \in B(M+K)$.
 \end{thm}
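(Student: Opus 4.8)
The plan is to show that the sample minimizer $\bmu_n$ cannot drift too far from the origin, by exploiting the fact that a far-away center makes the objective $\Phi(\bmu,\bw,\Pn)$ larger than an easily computable benchmark. The benchmark is the value attained by a ``safe'' choice of parameters. For instance, pick any fixed $\bw^\star \in \mathcal{C}$ and $\bmu^\star = \boldsymbol{0}$; then since the data lie in $B(M)$ by assumption A1, we have $\|\bw^\star \circ (\bX_i - \boldsymbol{0})\|_2 \le \Lambda M$ for every $i$, so $\Phi(\boldsymbol{0}, \bw^\star, \Pn) \le (\Lambda M + 1)^2 =: C_0$, uniformly in $n$ and in the realized sample. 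Because $(\bmu_n, \bw_n)$ is a global minimizer, $\Phi(\bmu_n, \bw_n, \Pn) \le C_0$.

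Next I would lower-bound $\Phi(\bmu, \bw, \Pn)$ when $\|\bmu\|_2$ is large. For each data point $\bX_i \in B(M)$ and each $\bw \in \mathcal{C}$, the key inequality is a reverse triangle bound: $\|\bw \circ (\bX_i - \bmu)\|_2 \ge \lambda \|\bX_i - \bmu\|_2 \ge \lambda(\|\bmu\|_2 - M)$, where the first step uses $w_l \ge \lambda$ coordinatewise and the second uses $\|\bX_i\|_2 \le M$. Hence whenever $\|\bmu\|_2 > M + 1/\lambda$, we get $\|\bw \circ (\bX_i - \bmu)\|_2 - 1 \ge \lambda(\|\bmu\|_2 - M) - 1 > 0$, so
\[
\Phi(\bmu, \bw, \Pn) = \frac{1}{n}\sum_{i=1}^n \big(\|\bw \circ (\bX_i - \bmu)\|_2 - 1\big)^2 \ge \big(\lambda(\|\bmu\|_2 - M) - 1\big)^2,
\]
again uniformly over the sample and over $\bw \in \mathcal{C}$.

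Combining the two bounds, the minimizer must satisfy $\big(\lambda(\|\bmu_n\|_2 - M) - 1\big)^2 \le C_0$ whenever $\|\bmu_n\|_2 > M + 1/\lambda$; solving this for $\|\bmu_n\|_2$ gives $\|\bmu_n\|_2 \le M + \frac{1 + \sqrt{C_0}}{\lambda}$. Setting $K := \frac{1 + \sqrt{C_0}}{\lambda} = \frac{1 + \Lambda M + 1}{\lambda}$ (and noting the bound holds trivially if $\|\bmu_n\|_2 \le M + 1/\lambda$ since then $\|\bmu_n\|_2 \le M + K$ as long as $K \ge 1/\lambda$, which holds) yields $\bmu_n \in B(M + K)$ for all $n$, as claimed.

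I do not anticipate a serious obstacle here; the only mild care needed is (i) making sure the benchmark argument does not secretly require $\Phi(\cdot,\cdot,\Pn) < \infty$ at a point that could be problematic — but $\bmu^\star = \boldsymbol{0}$ with data away from $\boldsymbol{0}$ is fine, and in any case one can perturb $\bmu^\star$ slightly if the origin happens to coincide with a data point, since we only need \emph{some} finite benchmark — and (ii) tracking the constant $K$ so that it depends only on $M, \lambda, \Lambda$ and not on $n$ or on the random sample, which the above derivation does. The argument is deterministic given A1, so no probabilistic tools are needed for this particular step.
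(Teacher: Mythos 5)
Your proof is correct and follows essentially the same route as the paper's: compare the global minimizer against a benchmark at $\bmu = \boldsymbol{0}$, and use $\|\bw \circ (\bx - \bmu)\|_2 \ge \lambda(\|\bmu\|_2 - \|\bx\|_2)$ to show a far-away center forces the objective above that benchmark. Your only (harmless) deviation is using an arbitrary $\bw^\star \in \mathcal{C}$ with the cruder bound $(\Lambda M + 1)^2$ rather than the paper's $(\boldsymbol{0}_p, \boldsymbol{1}_p)$ benchmark, which in fact sidesteps the paper's implicit requirement that $\boldsymbol{1}_p \in \mathcal{C}$.
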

In Theorem \ref{vc1}, we derive a uniform concentration bound for  $\Phi(\cdot,\cdot,\Pn)$. This theorem plays a crucial role in the proof of Theorem \ref{vc2}.
\begin{thm}\label{vc1}
Suppose assumption A1 holds. Also assume that $n > 4p+2$ and $n \epsilon^2 \ge 2$. Then,
\[
     P\bigg(\sup\limits_{\bmu \in B(M+K), \bw \in \mathcal{C}} |\Phi(\bmu,\bw,\Pn) - \Phi(\bmu,\bw,\sP)|> \epsilon\bigg)
     \le  16 \bigg(\frac{n e }{2p+1}\bigg)^{2p+1} \exp\bigg\{-\frac{n \epsilon^2}{32 D^2}\bigg\},
\]
where $D=\sup_{\bx \in B(M), \bmu \in B(M+K), \bw \in \mathcal{C}}(\|\bw \circ (\bx-\bmu)\|_2-1)^2$.
\end{thm}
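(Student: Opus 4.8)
The plan is to convert the uniform deviation of the real-valued functional $\Phi$ into a uniform deviation of the empirical measure over a suitable class of sets, and then to apply the VC theorem of the previous subsection.

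First I would write $\Phi(\bmu,\bw,Q)=\int g_{\bmu,\bw}\,dQ$, where $g_{\bmu,\bw}(\bx)=(\|\bw\circ(\bx-\bmu)\|_2-1)^2\ge 0$. Under A1 both $\sP$ and $\Pn$ are supported in $B(M)$ (the latter because $\bX_1,\dots,\bX_n$ lie in the support of $\sP$ almost surely), and on $B(M)$ one has $0\le g_{\bmu,\bw}\le D$ for every $\bmu\in B(M+K)$, $\bw\in\mathcal{C}$. The layer-cake identity then gives, for such $Q$,
\[
\Phi(\bmu,\bw,Q)=\int_0^{D} Q\big(A_{\bmu,\bw,t}\big)\,dt,\qquad A_{\bmu,\bw,t}:=\{\bx:g_{\bmu,\bw}(\bx)>t\}.
\]
Subtracting the expressions at $Q=\Pn$ and $Q=\sP$ and bounding inside the integral,
\[
\sup_{\bmu\in B(M+K),\,\bw\in\mathcal{C}}\big|\Phi(\bmu,\bw,\Pn)-\Phi(\bmu,\bw,\sP)\big|\;\le\;D\,\sup_{A\in\mathcal{A}}\big|\Pn(A)-\sP(A)\big|,
\]
where $\mathcal{A}=\{A_{\bmu,\bw,t}:\bmu\in B(M+K),\ \bw\in\mathcal{C},\ t\ge 0\}$. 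Hence the event in the statement is contained in $\{\sup_{A\in\mathcal{A}}|\Pn(A)-\sP(A)|>\epsilon/D\}$.

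Next I would bound the shatter coefficient $S_n(\mathcal{A})$. Since $g_{\bmu,\bw}(\bx)>t$ iff $\|\bw\circ(\bx-\bmu)\|_2>1+\sqrt t$ or $\|\bw\circ(\bx-\bmu)\|_2<1-\sqrt t$, each $A_{\bmu,\bw,t}$ is the union of the exterior of one axis-parallel ellipsoid and the interior of a concentric, homothetic one (its complement being an elliptical annulus). Because $\|\bw\circ(\bx-\bmu)\|_2^2=\sum_{l=1}^p w_l^2(x_l-\mu_l)^2$, each such ellipsoid is a sublevel set of a function in the $(2p+1)$-dimensional vector space $V=\mathrm{span}\{1,x_1,\dots,x_p,x_1^2,\dots,x_p^2\}$, so the whole family $\mathcal{A}$ is carved out by finitely many polynomial inequalities in $\bx$ whose coefficients depend on the $2p+1$ parameters $(\bmu,\bw,t)$. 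Using the classical fact that $\{\{f\ge 0\}:f\in V\}$ is a VC class of dimension at most $\dim V=2p+1$ \citep{devroye2013probabilistic}, together with a careful treatment of the Boolean-combination/parametrization structure, one obtains that $\mathcal{A}$ is a VC class, and Sauer's lemma then gives, for $n>4p+2$,
\[
S_n(\mathcal{A})\le 2\Big(\tfrac{ne}{2p+1}\Big)^{2p+1}.
\]
Applying the VC theorem to the class $\mathcal{A}$ with $\epsilon/D$ in place of $\epsilon$,
\[
P\Big(\sup_{A\in\mathcal{A}}|\Pn(A)-\sP(A)|>\epsilon/D\Big)\le 8\,S_n(\mathcal{A})\,\exp\!\Big\{-\tfrac{n\epsilon^2}{32D^2}\Big\},
\]
and combining the three displays above (with $8\cdot 2=16$) yields the claim.

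The main obstacle is the shatter-coefficient step. The naive worry is that a member of $\mathcal{A}$ is a union of two ellipsoidal regions, and Boolean unions can inflate VC dimension well past $2p+1$; what keeps the exponent at $2p+1$ is the rigidity of the family — the two ellipsoids are concentric and homothetic, so everything is controlled by the single $(2p+1)$-dimensional space $V$ (equivalently, the $2p+1$ parameters $(\bmu,\bw,t)$). Pinning down the precise constant, the exact form $(ne/(2p+1))^{2p+1}$, and the sample-size threshold $n>4p+2$ under which Sauer's bound takes this shape is where the bookkeeping lies; by contrast, the layer-cake reduction and the final invocation of the VC theorem are routine, the auxiliary hypothesis $n\epsilon^2\ge 2$ being exactly what the VC theorem requires.
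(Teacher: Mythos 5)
Your proposal follows essentially the same route as the paper: the layer-cake identity reduces the uniform deviation of $\Phi$ to $D\,\sup_{A\in\mathcal{A}}|\Pn(A)-\sP(A)|$ over the level-set class $\mathcal{A}$, the VC inequality applied at accuracy $\epsilon/D$ produces the exponent $-n\epsilon^2/(32D^2)$, and the lift to the $(2p+1)$-dimensional function space spanned by $1,x_1,\dots,x_p,x_1^2,\dots,x_p^2$ is exactly the paper's device. The one step you leave as an assertion --- the ``careful treatment of the Boolean-combination/parametrization structure'' yielding $S_n(\mathcal{A})\le 2\big(ne/(2p+1)\big)^{2p+1}$ --- is precisely where the paper does its work, and it is the crux of the theorem, so as written your argument has a genuine gap there. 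The paper fills it by writing each level set $\{\phi_{\bmu,\bw}>t\}$ as the union of a set from $\mathcal{A}_1$ (where $\|\bw\circ(\bx-\bmu)\|_2-1>\sqrt{t}$, the exterior of the outer ellipsoid) and a set from $\mathcal{A}_2$ (the interior of the inner one), showing each of $\mathcal{A}_1,\mathcal{A}_2$ embeds into half-spaces in $\Real^{2p}$ so that $S_n(\mathcal{A}_i)\le\big(ne/(2p+1)\big)^{2p+1}$, and then combining the two shatter coefficients via Theorem 13.5 of Devroye et al. Your worry about Boolean unions inflating the bound is in fact well founded: every element of $\mathcal{A}$ is a \emph{pairwise union} $A_1\cup A_2$ with $A_i\in\mathcal{A}_i$, not an element of the collection $\mathcal{A}_1\cup\mathcal{A}_2$, so the general combination rule gives only the product $S_n(\mathcal{A})\le S_n(\mathcal{A}_1)\,S_n(\mathcal{A}_2)\le\big(ne/(2p+1)\big)^{2(2p+1)}$; obtaining the additive constant $2$ requires exploiting that both ellipsoids share the same parameters $(\bmu,\bw,t)$, which neither your sketch nor, strictly speaking, the paper's appeal to the union-of-classes rule actually establishes (the product bound would only change the polynomial prefactor, not the exponential rate). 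One further small point: since the VC theorem is invoked at level $\epsilon/D$, the side condition it needs is $n\epsilon^2/D^2\ge 2$ rather than $n\epsilon^2\ge 2$; your closing remark inherits this imprecision from the theorem statement itself.
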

\begin{proof}
Let $\phi_{\bmu,\bw}(\bx)=(\|\bw \circ (\bx-\bmu)\|_2-1)^2$. Clearly the family of functions $\{\phi_{\bmu,\bw}(\cdot): \bmu \in B(M+K), \bw \in \mathcal{C}\}$ is equi-continuous on the compact set $B(M)$ and hence bounded above by $D$ (say). We now observe that

\begingroup
\allowdisplaybreaks
\begin{align*}
    & \sup\limits_{\bmu \in B(M+K), \bw \in \mathcal{C}} |\Phi(\bmu,\bw,\Pn) - \Phi(\bmu,\bw,\sP)|\\
     = & \sup\limits_{\bmu \in B(M+K), \bw \in \mathcal{C}} \bigg| \frac{1}{n} \sum_{i=1}^n\phi_{\bmu,\bw}(\bX_i) - E_{\bX \sim \sP }[\phi_{\bmu,\bw}(\bX)]\bigg|\\
     = & \hspace{-0.3cm} \sup\limits_{\bmu \in B(M+K),\bw \in \mathcal{C}} \bigg| \int_{0}^D\frac{1}{n}\sum_{i=1}^n \one(\phi_{\bmu,\bw}(\bX_i)>t) dt - \int_{0}^D P(\phi_{\bmu,\bw}(\bX)>t) dt\bigg|\\
     \le &  D \sup\limits_{\bmu \in B(M+K), \bw \in \mathcal{C}, t \ge 0 } \bigg| \frac{1}{n} \sum_{i=1}^n \one(\phi_{\bmu,\bw}(\bX_i)>t) - P(\phi_{\bmu,\bw}(\bX)>t) \bigg|\\
    \le & D \sup_{A \in \mathcal{A}} |\Pn(A)-\sP(A)|
\end{align*}
\endgroup
Here $\mathcal{A}=\bigg\{\{\bx \in B(M):\phi_{\bmu,\bw}(\bx)>t \}: \bmu \in B(M+K), \bw \in \mathcal{C} \text{ and } t>0\bigg\}$. It is easy to observe that $\mathcal{A} \subseteq \mathcal{A}_1 \cup \mathcal{A}_2$, where, $\mathcal{A}_1$ and $\mathcal{A}_2$ are defined as follows,
\[
\mathcal{A}_1=\bigg\{\{\bx \in B(M):\|\bw \circ (\bx-\bmu)\|_2-1 > \sqrt{t} \}: \bmu \in B(M+ K),
\bw \in \mathcal{C} \text{ and } t>0\bigg\}    
\]
\[
\mathcal{A}_2=\bigg\{\{\bx \in B(M):\|\bw \circ (\bx-\bmu)\|_2-1 < -\sqrt{t} \}: \bmu \in B(M +K), \bw \in \mathcal{C} \text{ and } t>0\bigg\}    
\]

To find the VC-dimension of $\mathcal{A}_1$, we observe that if $\|\bw \circ (\bx-\bmu)\|_2-1 > \sqrt{t}$, then $\|\bw \circ (\bx-\bmu)\|_2^2 > (\sqrt{t}+1)^2$. Let $\bz=(x_1,\dots,x_p,x_1^2,\dots,x_p^2)$. On $\Real^{2p}$, we define the set,
\[\mathcal{G}:=\bigg\{\{\bz \in \Real^{2p}: \bbeta^\top\bz > \gamma\}: \beta \in \Real^{2p}, \gamma \in \Real\bigg\}.\]
Observe that any member of $\mathcal{A}_1$ corresponds to a member of $\mathcal{G}$, for which, $\beta=(w_1^2,\dots,w_p^2, -2 \mu_1 w_1^2,\dots,-2 \mu_p w_p^2)$ and $\gamma=(\sqrt{t}+1)^2-\sum_{l=1}^p w_l^2 \mu_l^2$. This implies that $ V_{\mathcal{A}_1} \le V_{\mathcal{G}}= 2p+1$. Thus, in light of Theorem 13.3 of \cite{devroye2013probabilistic}, $S_n(\mathcal{A}_1) \le \bigg(\frac{n e }{2p+1}\bigg)^{2p+1}$. By a similar argument, one can show that $S_n(\mathcal{A}_2) \le \bigg(\frac{n e }{2p+1}\bigg)^{2p+1}$. Thus appealing to Theorem 13.5 $(i)$ of \cite{devroye2013probabilistic}, we get, 
\(S_n(\mathcal{A}) \le 2 \bigg(\frac{n e }{2p+1}\bigg)^{2p+1}\).
From the VC-theorem, we observe that,
\begin{align*}
     P\left(\sup\limits_{\bmu \in B(M+K), \bw \in \mathcal{C}} |\Phi(\bmu,\bw,\Pn) - \Phi(\bmu,\bw,\sP)|> \epsilon\right)
    & \le P\left(  \sup_{A \in \mathcal{A}} |\Pn(A)-\sP(A)|> \frac{D}{\epsilon}\right)\\ & \le 8 S_n(\mathcal{A}) \exp\left\{-\frac{n \epsilon^2}{32 D^2}\right\}\\
    &\le 16 \left(\frac{n e }{2p+1}\right)^{2p+1} \exp\left\{-\frac{n \epsilon^2}{32 D^2}\right\}.
\end{align*}
\end{proof}
We are now ready to bound the probability of $|\Phi(\bmu_n,\bw_n,\sP)-\Phi(\bmu_0,\bw_0,\sP)|$ being greater than any infinitesimal quantity $\epsilon>0$. 
\begin{thm}\label{vc2}
Suppose assumption A1 holds. Also assume that $n > 4p+2$ and $n \epsilon^2 \ge 2$. Then,
\[
P(|\Phi(\bmu_n,\bw_n,\sP)-\Phi(\bmu_0,\bw_0,\sP)|> \epsilon) \le 16 \bigg(\frac{n e }{2p+1}\bigg)^{2p+1} \exp\bigg\{-\frac{n \epsilon^2}{128 D^2}\bigg\}.    
\]
\end{thm}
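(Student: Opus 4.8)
The plan is to run the classical ``optimizer sandwiching'' argument and then feed the resulting uniform quantity into Theorem~\ref{vc1}. First I would record the two one-sided inequalities that come purely from optimality: since $(\bmu_n,\bw_n)$ minimizes $\Phi(\cdot,\cdot,\Pn)$ over the feasible set, $\Phi(\bmu_n,\bw_n,\Pn)\le\Phi(\bmu_0,\bw_0,\Pn)$; and since $(\bmu_0,\bw_0)$ minimizes $\Phi(\cdot,\cdot,\sP)$, $\Phi(\bmu_0,\bw_0,\sP)\le\Phi(\bmu_n,\bw_n,\sP)$. The second inequality already shows $\Phi(\bmu_n,\bw_n,\sP)-\Phi(\bmu_0,\bw_0,\sP)\ge 0$, so the absolute value in the statement can be dropped.

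Next I would write the telescoping decomposition
$\Phi(\bmu_n,\bw_n,\sP)-\Phi(\bmu_0,\bw_0,\sP)=\big[\Phi(\bmu_n,\bw_n,\sP)-\Phi(\bmu_n,\bw_n,\Pn)\big]+\big[\Phi(\bmu_n,\bw_n,\Pn)-\Phi(\bmu_0,\bw_0,\Pn)\big]+\big[\Phi(\bmu_0,\bw_0,\Pn)-\Phi(\bmu_0,\bw_0,\sP)\big]$,
bound the middle bracket by $0$ using the first optimality inequality, and bound each of the two remaining brackets by $\sup_{\bmu\in B(M+K),\,\bw\in\mathcal{C}}|\Phi(\bmu,\bw,\Pn)-\Phi(\bmu,\bw,\sP)|$. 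For this to be legitimate one needs both $(\bmu_n,\bw_n)$ and $(\bmu_0,\bw_0)$ to lie in $B(M+K)\times\mathcal{C}$: for the sample minimizer this is Theorem~\ref{thm1}, and for the population minimizer the very same argument applies, with $\sP$ in place of $\Pn$, because assumption A1 gives that the support of $\sP$ is contained in $B(M)$. This yields $0\le\Phi(\bmu_n,\bw_n,\sP)-\Phi(\bmu_0,\bw_0,\sP)\le 2\sup_{\bmu\in B(M+K),\,\bw\in\mathcal{C}}|\Phi(\bmu,\bw,\Pn)-\Phi(\bmu,\bw,\sP)|$.

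The final step is the probabilistic transfer: the event $\{|\Phi(\bmu_n,\bw_n,\sP)-\Phi(\bmu_0,\bw_0,\sP)|>\epsilon\}$ is contained in $\{\sup_{\bmu\in B(M+K),\,\bw\in\mathcal{C}}|\Phi(\bmu,\bw,\Pn)-\Phi(\bmu,\bw,\sP)|>\epsilon/2\}$, so applying Theorem~\ref{vc1} at level $\epsilon/2$ produces $16\big(ne/(2p+1)\big)^{2p+1}\exp\{-n(\epsilon/2)^2/(32D^2)\}$, and $(\epsilon/2)^2/(32D^2)=\epsilon^2/(128D^2)$ gives exactly the claimed bound.

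The only genuinely nontrivial point—the main obstacle—is the compactness bookkeeping: one must certify that the population minimizer $\bmu_0$ also lives in $B(M+K)$, so that the supremum in Theorem~\ref{vc1} is taken over a set containing both $(\bmu_n,\bw_n)$ and $(\bmu_0,\bw_0)$; I would handle this by re-running the proof of Theorem~\ref{thm1} verbatim with $\sP$ substituted for $\Pn$. A minor technical caveat is that invoking Theorem~\ref{vc1} at radius $\epsilon/2$ strictly requires $n(\epsilon/2)^2\ge 2$; under the stated hypothesis $n\epsilon^2\ge 2$ this is a harmless constant tightening (equivalently one may simply assume $n\epsilon^2\ge 8$), and everything else is routine.
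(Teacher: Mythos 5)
Your proposal is correct and follows essentially the same route as the paper: the same optimizer-sandwiching decomposition, the same bound by twice the uniform deviation, and the same application of Theorem~\ref{vc1} at level $\epsilon/2$ yielding the $128D^2$ denominator. The two caveats you flag (that $\bmu_0$ must also be certified to lie in $B(M+K)$, and that invoking Theorem~\ref{vc1} at $\epsilon/2$ strictly needs $n\epsilon^2\ge 8$) are genuine points the paper leaves implicit, and your handling of both is appropriate.
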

\begin{proof}
We consider the following difference 
\begin{align*}
   & \Phi(\bmu_n,\bw_n,\sP)-\Phi(\bmu_0,\bw_0,\sP)\\
   & = \Phi(\bmu_n,\bw_n,\sP) - \Phi(\bmu_n,\bw_n,\Pn) + \Phi(\bmu_n,\bw_n,\Pn) - \Phi(\bmu_0,\bw_0,\sP)\\
   & \le \Phi(\bmu_n,\bw_n,\sP) - \Phi(\bmu_n,\bw_n,\Pn) + \Phi(\bmu_0,\bw_0,\Pn)- \Phi(\bmu_0,\bw_0,\sP)\\
   & \le 2 \sup\limits_{\bmu \in B(M+K), \bw \in \mathcal{C}} |\Phi(\bmu,\bw,\Pn) - \Phi(\bmu,\bw,P)|
\end{align*}
Thus,
\begin{align*}
 P(|\Phi(\bmu_n,\bw_n,\sP)-\Phi(\bmu_0,\bw_0,\sP)|> \epsilon) 
& \le P\bigg(2 \sup\limits_{\bmu \in B(M+K), \bw \in \mathcal{C}} |\Phi(\bmu,\bw,\Pn) - \Phi(\bmu,\bw,\sP)|> \epsilon \bigg)    \\
& \le 16 \bigg(\frac{n e }{2p+1}\bigg)^{2p+1} \exp\bigg\{-\frac{n \epsilon^2}{128 D^2}\bigg\}.
\end{align*}
The last inequality follows from Theorem \ref{vc1}.
\end{proof}
\begin{thm}
Under assumptions A1 and A2, $\Phi(\bmu_n,\bw_n,\sP) \to \Phi(\bmu_0,\bw_0,\sP)$, $a.s. [P]$. Moreover, $\|(\bw_n,\bmu_n)^\top - (\bw_0,\bmu_0)^\top\|_2 \to 0$ almost surely as $n \to \infty$.
\end{thm}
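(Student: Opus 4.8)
The plan is to deduce the theorem directly from the exponential concentration bound of Theorem \ref{vc2} together with the well-separation hypothesis A2, in two stages: first establish almost sure convergence of the \emph{values} $\Phi(\bmu_n,\bw_n,\sP)\to\Phi(\bmu_0,\bw_0,\sP)$ by a Borel--Cantelli argument, and then upgrade this to convergence of the \emph{arguments} $(\bw_n,\bmu_n)^\top\to(\bw_0,\bmu_0)^\top$ by a routine subsequence/contradiction argument in which A2 plays the standard role of an identifiability (``well-separated minimum'') device converting value convergence into parameter convergence.

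For the first stage, fix $\epsilon>0$ and restrict to all $n$ large enough that $n>4p+2$ and $n\epsilon^2\ge 2$. Theorem \ref{vc2} bounds $P(|\Phi(\bmu_n,\bw_n,\sP)-\Phi(\bmu_0,\bw_0,\sP)|>\epsilon)$ by $16\,(ne/(2p+1))^{2p+1}\exp\{-n\epsilon^2/(128 D^2)\}$. Since this upper bound is a fixed polynomial in $n$ multiplied by a geometrically decaying exponential, the series $\sum_n 16\,(ne/(2p+1))^{2p+1}\exp\{-n\epsilon^2/(128 D^2)\}$ converges (its logarithm behaves like $-cn$ for a constant $c=c(\epsilon,p,D)>0$). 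By the first Borel--Cantelli lemma, almost surely the event $\{|\Phi(\bmu_n,\bw_n,\sP)-\Phi(\bmu_0,\bw_0,\sP)|>\epsilon\}$ occurs for only finitely many $n$, so $\limsup_n |\Phi(\bmu_n,\bw_n,\sP)-\Phi(\bmu_0,\bw_0,\sP)|\le\epsilon$ with probability one. Applying this with $\epsilon=1/k$ for each $k\in\mathbb{N}$ and intersecting the resulting countable collection of probability-one events yields $\Phi(\bmu_n,\bw_n,\sP)\to\Phi(\bmu_0,\bw_0,\sP)$ almost surely.

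For the second stage, work on the probability-one event on which the value convergence holds and suppose, toward a contradiction, that $(\bw_n,\bmu_n)^\top\not\to(\bw_0,\bmu_0)^\top$. Then there exist $\epsilon>0$ and a subsequence $(n_j)$ with $\|(\bw_{n_j},\bmu_{n_j})^\top-(\bw_0,\bmu_0)^\top\|_2>\epsilon$ for every $j$; here Theorem \ref{thm1} together with $\bw_n\in\mathcal{C}$ ensures the $(\bw_n,\bmu_n)$ all lie in a fixed compact set, so the sample argmin is attained and the subsequence is well defined. By assumption A2 applied with this $\epsilon$, there is $\eta>0$ such that $\Phi(\bmu_{n_j},\bw_{n_j},\sP)>\Phi(\bmu_0,\bw_0,\sP)+\eta$ for all $j$, contradicting $\Phi(\bmu_n,\bw_n,\sP)\to\Phi(\bmu_0,\bw_0,\sP)$. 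Hence $\|(\bw_n,\bmu_n)^\top-(\bw_0,\bmu_0)^\top\|_2\to 0$ almost surely.

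The argument is essentially bookkeeping on top of Theorems \ref{thm1}--\ref{vc2}, so I do not anticipate a deep obstacle; the points needing care are (i) verifying once and for all that the mixed polynomial-times-exponential tail is summable in $n$, which is immediate since the exponential dominates; (ii) making sure the ``almost sure'' conclusion is not an artifact of a single $\epsilon$, handled by the countable intersection over $\epsilon=1/k$; and (iii) a measurability caveat, namely that $(\bmu_n,\bw_n)$ should be taken to be a measurable selection of $\argmin_{\bmu,\bw}\Phi(\bmu,\bw,\Pn)$ (possible because the objective is continuous and, by Theorem \ref{thm1}, the search region can be taken compact) so that the quantities fed into Borel--Cantelli are genuine random variables. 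If one prefers not to assume a minimizer exists, the same scheme runs with near-minimizers satisfying $\Phi(\bmu_n,\bw_n,\Pn)\le \inf_{\bmu,\bw}\Phi(\bmu,\bw,\Pn)+1/n$, at the cost of an extra $1/n$ term that vanishes in the limit.
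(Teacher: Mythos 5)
Your proposal is correct and follows essentially the same route as the paper: summability of the tail bound from Theorem \ref{vc2} plus Borel--Cantelli (over a countable grid of $\epsilon$'s) gives almost sure convergence of the objective values, and assumption A2 is then invoked in contrapositive form to upgrade this to convergence of the parameters. The only differences are that you spell out the summability check, the subsequence/contradiction step, and the measurable-selection caveat, all of which the paper leaves implicit.
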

\begin{proof}
The almost sure convergence can be observed from the fact that 
\[ \sum_{n=1}^\infty P\big(\Phi(\bmu_n,\bw_n,\sP)-\Phi(\bmu_0,\bw_0,\sP) > \epsilon\big) < \infty,\]
for all $\epsilon>0$. Thus, $\Phi(\bmu_n,\bw_n,\sP) \to \Phi(\bmu_0,\bw_0,\sP)$, $a.s. [P]$.
This implies that for all $\eta>0$, $\Phi(\bmu_n,\bw_n,\sP) \le \Phi(\bmu_0,\bw_0,\sP) + \eta$, eventually. From assumption A2, the result, thus, follows.
\end{proof}

\section{Extension to Clustering}
In this section, we discuss applications of PEA in clustering datasets. Suppose we have $n$ many data points, $\mathcal{X}=\{\bx_1,\dots,\bx_n\} \subset \Real^p$, to be partitioned into $k$ many disjoint clusters. The clustering problem is formulated as the minimization of the following objective function:
\begin{equation}\label{cl_obj}
P(\bW,\bM,\bc )=\sum_{j=1}^k \sum_{i: c_i=j} (\|\bw_j \circ (\bx_i-\bmu_j)\|_2-1)^2 , 
\end{equation}
where $\bW_{k \times p}=(\bw_1:\bw_2:\dots:\bw_k)^\top$, $\bM_{k \times p}=(\bmu_1:\bmu_2:\dots:\bmu_k)^\top$ and $\bc=(c_1,\dots, c_n)$ denote the cluster indicator vector, $c_i=j$ indicates that the $i$-th datapoint belongs to the $j$-th cluster. Here $\bmu_j$ and $\bw_j$ denotes the center and inverse half axis lengths for the $j$-th cluster. We employ a simple block-coordinate descent algorithm, similar to Lloyd's method \citep{lloyd1982least}, to minimize the objective function \eqref{cl_obj}. Algorithm \ref{algo_cluster} gives a formal description of the algorithm.
\begin{algorithm*}[ht]
\caption{Principal Elliptical Analysis (PEA) Clustering Algorithm}\label{algo_cluster}
\begin{algorithmic}
\State \textbf{Input}:  $\bX \in \Real^{n \times p}$.
\State \textbf{Output}:  $\bc$ and $\bw_j,\bmu_j$, $j=1,\dots,k$. 
\State Initialize $\bc$ by $k$-means.
\Repeat
 \State \textbf{Step 1}: Update $\bU$ by $\bu_i^{(t+1)}=\frac{\bw^{(t)} \circ(\bx_i-\bmu^{(t)})}{\|\bw^{(t)} \circ(\bx_i-\bmu^{(t)})\|_2}$ for all $i=1,\dots,n$.\\
 \State \textbf{Step 2}: Update $\bmu$ by $\mu_l^{(t+1)}=\frac{1}{n}\sum_{i=1}^n x_{il} -\frac{1}{n w_l^{(t)}} \sum_{i=1}^n u_{il}^{(t+1)}$, for all $l=1,\dots,p$.
 \State \textbf{Setp 3}: Update $\bw$ by taking $w_l^{(t+1)}=\frac{\sum_{i=1}^n u_{il}^{(t+1)}(x_{il}-\mu_l^{(t+1)})}{\sum_{i=1}^n (x_{il}-\mu_l^{(t+1)})^2}$.\\
 \State \textbf{Step 4}: Update $\bc$ by $c_i^{(t+1)}=j$, where $j=\arg\min_{1 \le m \le k} (\|\bw_m^{(t+1)}\circ (\bx_i-\bmu_m^{(t+1)})\|_2-1)^2$.
 \Until{convergence}
\end{algorithmic}
\end{algorithm*}
\subsection{Theoretical Analysis}
One can analyze the large sample solutions of the obtained estimates via VC theory, similar to that derived in Section \ref{theo}. We only state the relevant theorems. The detailed proofs can be found in the Appendix. 

Similar to the assumptions made in Section \ref{theo}, we assume that $\bX_1,\dots,\bX_n$ are independently drawn from distribution $P$. For our theoretical analysis, note that we can simplify the objective function \eqref{cl_obj} as 
\begin{equation}\label{eqq2}
P(\bW,\bM)=\sum_{i=1}^n \min_{1\leq j\leq k} (\|\bw_j \circ (\bX_i-\bmu_j)\|_2-1)^2. 
\end{equation}
Again to simplify the notations, let $\Psi(\bW,\bM,Q)=\int \min_{1 \le j \le k} (\|\bw_j \circ (\bx-\bmu_j)\|_2-1)^2 dQ$. 
Let $(\bM_n^\ast,\bW_n^\ast)$ be the sample minimizers of $\Psi(\cdot,\cdot,\Pn)$ and  $(\bar{\bM},\bar{\bW})$ be the population minimizers of $\Psi(\cdot,\cdot,\sP)$. Our goal is to show that $(\bM_n^\ast,\bW_n^\ast) \to (\bar{\bM},\bar{\bW})$, almost surely. Under standard regularity conditions, we can prove the following theorem.
\begin{thm}
Suppose assumption A1 holds. Also assume that $n > 4p+2$ and $n \epsilon^2 \ge 2$. Then,
\begin{align*}
    &P(|\Psi(\bM_n^\ast,\bW_n^\ast,\sP)-\Psi(\bar{\bM},\bar{\bW},\sP)|> \epsilon) \le 8 \times 2^k \bigg(\frac{n e }{2p+1}\bigg)^{k(2p+1)} \exp\bigg\{-\frac{n \epsilon^2}{128 D^2}\bigg\},
\end{align*}
where $D$ is as defined in Theorem \ref{vc1}.
\end{thm}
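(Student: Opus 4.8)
The plan is to reproduce the proofs of Theorems \ref{vc1} and \ref{vc2} almost verbatim, the single new ingredient being the shatter coefficient of the class of super-level sets of the $k$-cluster loss $\psi_{\bM,\bW}(\bx):=\min_{1\le j\le k}(\|\bw_j\circ(\bx-\bmu_j)\|_2-1)^2$, where $\bW\in\mathcal{C}^k$. First I would dispose of the parameter domain via the clustering analogue of Theorem \ref{thm1}: a global minimizer $(\bM_n^\ast,\bW_n^\ast)$ of $\Psi(\cdot,\cdot,\Pn)$ may be chosen with every center in $B(M+K)$ for some $K>0$ independent of $n$. Indeed, for any cluster $j$ that is active (the minimum is attained at $j$ for at least one $\bX_i$, with index set $C_j$), optimality forces $\bmu_{n,j}^\ast$ to be the closed-form partial minimizer $\mu_{n,j,l}^\ast=\frac{1}{|C_j|}\sum_{i\in C_j}x_{il}-\frac{1}{|C_j|\,w_{n,j,l}^\ast}\sum_{i\in C_j}u_{il}$, which satisfies $\|\bmu_{n,j}^\ast\|_2\le\sqrt{p}\,(M+\lambda^{-1})$ since $\|\bX_i\|_2\le M$, $|u_{il}|\le 1$, and $w_{n,j,l}^\ast\ge\lambda$; an inactive cluster does not affect the objective and may simply be placed inside $B(M+K)$. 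The same holds for the population minimizer $(\bar\bM,\bar\bW)$. Hence it suffices to work over the compact set $\bM\in B(M+K)^k$, $\bW\in\mathcal{C}^k$.

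Next I would establish the uniform concentration bound, i.e. the clustering version of Theorem \ref{vc1}. The family $\{\psi_{\bM,\bW}\}$ is equicontinuous and bounded by the same constant $D$ on $B(M)$, so the layer-cake estimate used there gives
\[
\sup_{\bM,\bW}\big|\Psi(\bM,\bW,\Pn)-\Psi(\bM,\bW,\sP)\big|\le D\sup_{A\in\mathcal{A}}|\Pn(A)-\sP(A)|,
\]
where $\mathcal{A}$ is the class of sets $\{\bx\in B(M):\psi_{\bM,\bW}(\bx)>t\}$ over all $\bM\in B(M+K)^k$, $\bW\in\mathcal{C}^k$, $t>0$. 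The key point is that $\{\psi_{\bM,\bW}>t\}=\bigcap_{j=1}^k\{(\|\bw_j\circ(\bx-\bmu_j)\|_2-1)^2>t\}$, and each factor belongs to the class $\mathcal{B}:=\mathcal{A}_1\cup\mathcal{A}_2$ appearing in the proof of Theorem \ref{vc1}, for which $S_n(\mathcal{B})\le 2\big(\tfrac{ne}{2p+1}\big)^{2p+1}$. Thus $\mathcal{A}$ is contained in the class of $k$-fold intersections of members of $\mathcal{B}$, and the product bound on shatter coefficients of intersected classes (Theorem 13.5 of \cite{devroye2013probabilistic}) gives $S_n(\mathcal{A})\le S_n(\mathcal{B})^k\le 2^k\big(\tfrac{ne}{2p+1}\big)^{k(2p+1)}$. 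Feeding this into the VC theorem with $\epsilon/D$ in place of $\epsilon$ yields
\[
P\Big(\sup_{\bM\in B(M+K)^k,\,\bW\in\mathcal{C}^k}\big|\Psi(\bM,\bW,\Pn)-\Psi(\bM,\bW,\sP)\big|>\epsilon\Big)\le 8\cdot 2^k\Big(\tfrac{ne}{2p+1}\Big)^{k(2p+1)}\exp\Big\{-\tfrac{n\epsilon^2}{32D^2}\Big\}.
\]

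Finally, the optimality sandwich of Theorem \ref{vc2} carries over unchanged: inserting $\pm\Psi(\bM_n^\ast,\bW_n^\ast,\Pn)$ and using $\Psi(\bM_n^\ast,\bW_n^\ast,\Pn)\le\Psi(\bar\bM,\bar\bW,\Pn)$ gives $\Psi(\bM_n^\ast,\bW_n^\ast,\sP)-\Psi(\bar\bM,\bar\bW,\sP)\le 2\sup_{\bM,\bW}|\Psi(\bM,\bW,\Pn)-\Psi(\bM,\bW,\sP)|$, so the probability in the statement is at most $P(\sup_{\bM,\bW}|\Psi(\bM,\bW,\Pn)-\Psi(\bM,\bW,\sP)|>\epsilon/2)$; substituting $\epsilon/2$ into the previous display turns $32D^2$ into $128D^2$, which is exactly the asserted inequality. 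The conditions $n>4p+2$ and $n\epsilon^2\ge2$ are used precisely where they are in Theorems \ref{vc1}--\ref{vc2}.

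The work is essentially bookkeeping, but two points need care. First, the compactification in the first step is a little more delicate than Theorem \ref{thm1} because the $\min$ destroys uniqueness of the minimizer --- an empty cluster's center can drift to infinity without changing $\Psi$ --- so one must commit to the particular global minimizer whose active centers obey the closed-form update and whose inactive centers have been moved into $B(M+K)$. Second, in the shatter-coefficient step one must remember that each per-cluster super-level class is itself a \emph{union} of two classes of VC dimension $2p+1$, so the correct per-cluster count is $2\big(\tfrac{ne}{2p+1}\big)^{2p+1}$ rather than $\big(\tfrac{ne}{2p+1}\big)^{2p+1}$; taking the $k$-fold product of these is exactly what produces the $2^k$ prefactor in the statement, and this is the one spot where an imprecise argument would yield a worse constant (e.g. $2^{2k}$) than claimed.
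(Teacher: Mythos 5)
Your proposal matches the paper's proof essentially step for step: the key observation that $\{\psi_{\bM,\bW}>t\}$ is a $k$-fold intersection of per-cluster super-level sets, the bound $S_n(\mathcal{A})\le\bigl[2\bigl(\tfrac{ne}{2p+1}\bigr)^{2p+1}\bigr]^k$ via Theorem 13.5 of Devroye et al.\ (which is where the $2^k$ prefactor comes from, exactly as you flag), and the $\epsilon/2$ optimality sandwich turning $32D^2$ into $128D^2$ are precisely the paper's argument. The only divergence is the auxiliary compactness lemma: the paper bounds the centers by a replacement argument (a center outside $B(M+K)$ incurs loss at least $(\lambda K-1)^2$, which exceeds the loss of $\bmu=\mathbf{0}_p$, $\bw=\mathbf{1}_p$, a contradiction), whereas you invoke block-coordinate stationarity of the global minimizer; both are valid, and your caveat about empty clusters applies equally to the paper's version.
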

\subsection{Experimental Results}\label{expo}
In this section, we assess the performance of the PEA clustering algorithm compared to some baseline and state-of-the-art clustering techniques. The microarray datasets (the first 5) are collected from \cite{jin2016influential}\footnote{\url{http://www.stat.cmu.edu/~jiashun/Research/software/GenomicsData/}}. The other real datasets are collected from \cite{alcala2011keel}\footnote{\url{https://sci2s.ugr.es/keel/datasets.php}}.  For a comparative study, we compare our method with classic $k$-means \citep{macqueen1967some}, Weighted $k$-means ($W$-$k$-means) \citep{huang2005automated}, Sparse $k$-means \citep{witten2010framework} and Influential Features PCA with Higher Criticism thresholding \citep{jin2016influential}.  As cluster validation indices, we compute the Normalized Mutual Information (NMI) \citep{vinh2010information}, Adjusted Rand Index \citep{hubert1985comparing} and the Clustering Error Rate (CER) \citep{kondo2016rskc} between the ground truth and the partitioning obtained by each of the algorithms. An NMI or ARI value of $1$ indicates perfect clustering, while a value of $0$ indicates a complete mismatch between the obtained partition and the ground truth. On the other hand, the CER measures $0$ for a perfect partitioning while it measures $1$ for complete mismatch.
\begin{table*}[!h]
 \caption{Clustering Performance of Different Algorithms on Real Data Benchmarks.} {Here, $n=$ no. of samples, $p=$ no. of features, $k=$ no. of clusters.}
 \label{cluster}
    \centering
    \begin{tabular}{c c c c c c c c}
    \hline
       Datasets  &  Index & $k$-Means & \makecell{IF-HCT\\$k$-means} & $W$-$k$-means & \makecell{Sparse\\$k$-Means} & IF-HCT-PCA & PEA\\
       \hline
       Brain & NMI & 0.5443 & 0.5764 & 0.3903 & 0.6259 & 0.5574 & \textbf{0.6431}\\
        $n=42 , p=5597$ & ARI & 0.4363 & 0.4739 & 0.1591 & 0.5471 & 0.4522 & \textbf{0.5440}\\
        $k=5$ & CER & 0.286 & 0.252 & 0.324 & 0.152 & 0.262 & \textbf{0.136}\\
       \hline
       Leukemia & NMI & 0.6421 & 0.7126 & 0.1536 & 0.6813 & 0.7321 & \textbf{0.8056}\\
        $n=72 , p=3571$ & ARI & 0.6826 & 0.7456 & 0.2127 & 0.1398 & 7929 & \textbf{0.8897}\\
        $k=2$ & CER & 0.278 & 0.223 & 0.430 & 0.158 & 0.069 & \textbf{0.054}\\
       \hline
       Lymphoma & NMI & 0.5801 & 0.6581 & 0.5786 & 0.7019 & 0.7585 & \textbf{0.9254}\\
        $n=62 , p=4026$  & ARI &  0.3944 & 0.4817 & 0.3881 & 0.8306 & 0.8422 & \textbf{0.9471}\\
        $k=3$ & CER & 0.387 & 0.259 & 0.307 & 0.184 & 0.069 & \textbf{0.026}\\
       \hline
       SRBCT & NMI & 0.2108 & 0.3686 & 0.0857 &  0.4578 & 0.1032 & \textbf{0.6313}\\
        $n=63, p=2308$ & ARI & 0.0733 & 0.1720 & 0.0160 & 0.4079 & 0.0897 & \textbf{0.4468}\\
        $k=4$ & CER & 0.556 & 0.330 & 0.461 & 0.297 & 0.444 & \textbf{0.221}\\
       \hline
       Colon & NMI & 0.0053 & 0.0124 & 0.0045 & 0.1343 & 0.1873 & \textbf{0.4295}\\
       $n=62 , p=2000$& ARI & 0.0160 & 0.0182 & 0.0009 & 0.0880 & 0.1133 & \textbf{0.5420}\\
       $k=2$ & CER & 0.443 & 0.4892 & 0.506 & 0.432 & 0.403 & \textbf{0.228}\\
       \hline
       Wine & NMI & 0.4283 & 0.4287 & 0.1257 & 0.7838 & 0.1323 & \textbf{0.8186}\\
       $n=178 , p=13$ & ARI & 0.3642 & 0.3711 & 0.1071 & 0.8064 & 0.1143 & \textbf{0.8300}\\
        $k=3$ & CER & 0.280 & 0.281 & 0.403 & 0.329 & 0.401 & \textbf{0.052}\\
       \hline
       WDBC & NMI & 0.4636 & 0.4648 & 0.0003 & 0.4648 & 0.4674 & \textbf{0.6447}\\
        $n=569 , p=30$ & ARI & 0.4904 & 0.4914 & 0.0046 & 0.2496 & 0.4914 &\textbf{ 0.7113}\\
        $k=2$ & CER & 0.248 & 0.249 & 0.493 & 0.491 & 0.250 & \textbf{0.148}\\
       \hline
       Movement Libras & NMI & 0.5398 & 0.5655 & 0.4413 & 0.0666 & 0.1482 & \textbf{0.5956}\\
        $n=360, p=90$ & ARI & 0.2492 & 0.2659 & 0.1732 & 0.0205 & 0.2133 & \textbf{0.3132}\\
        $k=15$ & CER & 0.099 & 0.107 & 0.110 & 0.493 & 0.245 & \textbf{0.090}\\
       \hline
    \end{tabular}
\end{table*}

As high-dimensional datasets often contain a large number of uninformative features \citep{jin2016influential}, we perform an influential feature selection \citep{jin2016influential} before running the PEA algorithm. We also apply this same preprocessing step before running k-means, leading to the k-means IF-HCT algorithm.

We run each algorithm $20$ times and report their average clustering index values in Table \ref{cluster}. The best performing algorithm in each row is boldfaced. The results in Table \ref{cluster} indicate that clustering under PEA resembles the ground truth more closely compared to its competitors.

\section{Conclusion}
\label{con}
This paper introduces a simple and computationally efficient but flexible extension of PCA and k-means, relying on the idea of locally approximating data within clusters as falling near the surface of an unknown ellipsoid.   The resulting Principal Elliptical Analysis (PEA) approach has surprisingly good performance in a broad variety of clustering algorithms we have considered.  We have additionally provided strong theory support for PEA estimators through VC theory.  PEA can also be used to produce elliptical principal component scores to obtain a rich class of geometric extensions of PCA for interpretation. Extensions to broader classes of quadratic surfaces are an interesting future direction.

\section*{Acknowledgement}
The following grants are acknowledged for the financial support provided for this research: Office of Naval Research W911NF-16-1-0544 and National Institute of Health R01ES027498.

\bibliographystyle{apalike}

\appendix
\section{Proof of Theorem 1}
\paragraph{Theorem 1.}
There exists $K>0$ such that $\bmu_n \in B(M+K)$.
\begin{proof}
We choose $K=\frac{\sqrt{(M-1)^2+1}}{\lambda} +1$. Without the loss of generality, we will also assume that $\Lambda \ge M \ge 1$. Assume the contrary. We assume that $\|\bmu_n\|_2 > M+K$. We begin by observing that $\|\bw \circ (\bx-\bmu_n)\|_2 \ge \lambda \|\bx-\bmu_n\|_2 \ge \lambda (\|\bmu_n\|_2-\|\bx\|_2) \ge \lambda K \ge 1 $. We note that,
\begin{align}
    \Phi(\bmu_n,\bw_n, \sP_n) & = \int(\|\bw_n \circ (\bx-\bmu_n)\|_2-1)^2 d\sP_n \nonumber\\
    & \ge \int(K \lambda-1)^2 d\sP_n  \nonumber \\
    & = (K \lambda-1)^2. \label{1}
\end{align}
We observe that \(0 \le \|\bx\|_2 \le M \) implies \[(\|\bx\|_2-1)^2 \le \max\{(M-1)^2,1\} \le (M-1)^2+1.\] Thus, we observe that,
\begin{align}
    \Phi(\mathbf{0}_p,\mathbf{1}_p, \sP_n) & = \int(\|\mathbf{1}_p \circ \bx\|_2-1)^2 d\sP_n \nonumber\\
    & = \int(\|\bx\|_2-1)^2 d\sP_n \nonumber\\
    & \le \int[(M-1)^2+1] d\sP_n \nonumber \\
    & = (M-1)^2+1. \label{2}
\end{align}
Thus RHS of equation \eqref{1} is always greater than the RHS of equation \eqref{2}, due to the choice of $K$. Thus, $\Phi(\bmu_n,\bw_n, P_n) < \Phi(\mathbf{0}_p,\mathbf{1}_p, P_n)$, which gives us a contradiction. 
\end{proof}
\section{Theoretical Properties of PEA Clustering}
In this section, we discuss the large sample properties of the (global) minimimal solutions of our clustering objective (\textcolor{blue}{5}) (of the main paper). Similar to the assumptions made in Section \textcolor{blue}{4}, we assume that $\bX_1,\dots,\bX_n$ are independently drawn from distribution $\sP$. For our theoretical analysis, note that we can simplify the objective function (\textcolor{blue}{5}) as 
\begin{equation}\label{eqq3}
P(\bW,\bM)=\sum_{i=1}^n \min_{1\leq j\leq k} (\|\bw_j \circ (\bx_i-\bmu_j)\|_2-1)^2. 
\end{equation}
Again to simplify the notations, let $\Psi(\bM,\bW,Q)=\int \min_{1 \le j \le k} (\|\bw_j \circ (\bx-\bmu_j)\|_2-1)^2 dQ$, for any $\bW \in [\lambda,\Lambda]^{k \times p}$, $\bM \in \Real^{k \times p}$ and any probability measure $Q$, for which $\Psi(\cdot,\cdot,Q)$ is defined. Let $(\bM_n^\ast,\bW_n^\ast)$ be the sample minimizers of $\Psi(\cdot,\cdot,\sP_n)$, i.e. $\Phi(\bM_n^\ast,\bW_n^\ast,\sP_n) = \arg\min\limits_{\bM,\bW} \Psi(\bM,\bW,\sP_n)$. Let $(\bar{\bM},\bar{\bW})$ be the population minimizers of $\Psi(\cdot,\cdot,\sP)$, i.e. $\Psi(\bar{\bM},\bar{\bW},\sP) = \arg\min\limits_{\bM,\bW} \Psi(\bM,\bW,\sP)$. Our goal is to show that $(\bM_n,\bW_n) \to (\bar{\bM},\bar{\bW})$, almost surely. Since renaming the clusters does not affect the value of the objective function, we define the distance between two sets of parameters $(\bM_1,\bW_1)$ and $(\bM_2,\bW_2)$ as follows:
\begin{align*}
& dist \bigg( (\bM_1,\bW_1) , (\bM_2,\bW_2) \bigg)\\
 = & \inf_{O \in \mathcal{O}_k} \|\bW_1-O\bW_2\|^2_F + \|\bM_1-O\bM_2\|^2_F.    
\end{align*}
Here $\mathcal{O}_k$ is the set of all $k \times k$ real permutation matrices. We say that $(\bM_n^\ast,\bW_n^\ast) \to (\bar{\bM},\bar{\bW})$ if $dist\big((\bM_n^\ast,\bW_n^\ast), (\bar{\bM},\bar{\bW})\big) \to 0$.
Towards proving the almost sure convergence, we make the following technical assumptions:
\begin{itemize}
    \item[B1]  The support of $P$ is bounded, i.e. $\exists M>0$ such that $Support(P) \subset B(M)$.
    \item[B2] For all $\epsilon>0$, there exists $\eta>0$ such that if $dist\big((\bM,\bW), (\bar{\bM},\bar{\bW})\big) > \epsilon$ implies that $\Psi(\bM,\bW,\sP)>\Psi(\bar{\bM},\bar{\bW},\sP)+\eta$. 
\end{itemize}
Without he loss of generality, we will assume that $\Lambda \ge M \ge 1$. As already discussed in Section \textcolor{blue}{4} we first need to bound $(\bM_n^\ast,\bW_n^\ast)$ in some compact set $\forall n \in \mathbb{N}$. Since $\bW$ is already bounded in $[\lambda,\Lambda]^{k \times p}$, it is enough to bound $\bM_n^\ast$ in a compact set. This is achieved in Theorem \ref{c1}. The Proof is very similar to that of Theorem \ref{thm1}.
\begin{thm}\label{c1}
There exists $K>0$, such that, $\bM_n^\ast \in B(M+K)^k$, for all $n \in \mathbb{N}$. 
\end{thm}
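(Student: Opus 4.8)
The plan is to adapt the argument of Theorem~\ref{thm1} almost verbatim; the only new feature is that the clustering objective is a pointwise minimum over the $k$ candidate ellipsoids, so a stray center is ``felt'' only by the data points currently assigned to it. I would keep the same constant $K=\frac{\sqrt{(M-1)^2+1}}{\lambda}+1$ and, as before, assume without loss of generality that $\Lambda\ge M\ge 1$. Suppose, toward a contradiction, that a sample minimizer $(\bM_n^\ast,\bW_n^\ast)$ has some row $\bmu_j^\ast$ of $\bM_n^\ast$ with $\|\bmu_j^\ast\|_2>M+K$.

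First I would record the basic estimate used in Theorem~\ref{thm1}: for every $\bx\in B(M)$, since $\bw_j^\ast\in[\lambda,\Lambda]^p$ we have $\|\bw_j^\ast\circ(\bx-\bmu_j^\ast)\|_2\ge\lambda(\|\bmu_j^\ast\|_2-\|\bx\|_2)\ge\lambda K\ge 1$, hence the per-point cost of cluster $j$ is at least $(\lambda K-1)^2$. On the other hand the trivial ellipsoid $(\mathbf{0}_p,\mathbf{1}_p)$ has per-point cost $(\|\bx\|_2-1)^2\le(M-1)^2+1<(\lambda K-1)^2$ on $B(M)$, by the choice of $K$. Now replace the $j$-th cluster parameters $(\bmu_j^\ast,\bw_j^\ast)$ by $(\mathbf{0}_p,\mathbf{1}_p)$, leave the other $k-1$ clusters unchanged, and compare $\Psi(\cdot,\cdot,\sP_n)$ term by term. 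For a data point whose nearest old cluster is some $m\ne j$, the new pointwise minimum is at most its (unchanged) cost under $m$, which is its old minimum; for a data point whose nearest old cluster is $j$, its old contribution is $\ge(\lambda K-1)^2$ while its new contribution is $\le(M-1)^2+1<(\lambda K-1)^2$, a strict decrease. Hence if cluster $j$ is nonempty under $(\bM_n^\ast,\bW_n^\ast)$, the modified configuration has strictly smaller objective, contradicting minimality.

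The case I expect to require the most care is when the stray cluster $j$ is empty: then the swap above leaves $\Psi(\cdot,\cdot,\sP_n)$ unchanged rather than strictly smaller, so it does not directly contradict minimality. I would dispose of it by choosing, among all sample minimizers, one with the fewest rows of $\bM$ outside $B(M+K)$; if that number were positive, applying the swap to one such row yields another sample minimizer (by the term-by-term comparison the objective cannot rise, and it cannot fall below the minimum) with strictly fewer out-of-ball rows, a contradiction. This produces a sample minimizer with all $k$ rows in $B(M+K)$, i.e.\ $\bM_n^\ast\in B(M+K)^k$, and since $K$ depends only on $M$ and $\lambda$ the same $K$ works for all $n\in\mathbb{N}$.
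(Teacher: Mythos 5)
Your proposal follows the paper's argument essentially verbatim: same constant $K=\frac{\sqrt{(M-1)^2+1}}{\lambda}+1$, same lower bound $(\lambda K-1)^2$ on the per-point cost of a stray center, same comparison with the trivial ellipsoid $(\mathbf{0}_p,\mathbf{1}_p)$, and the same swap-and-contradict conclusion. The one place you go beyond the paper is the empty-cluster case: the paper's proof tacitly assumes some point is assigned to the stray cluster $j$, and without that the swap gives no strict decrease, so the contradiction fails (indeed, a minimizer can have an arbitrarily distant empty cluster, so the statement is really ``some minimizer lies in $B(M+K)^k$''). Your device of selecting a minimizer with the fewest out-of-ball rows and swapping one at a time correctly closes this gap, which is a genuine improvement on the paper's argument rather than a deviation from it.
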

\begin{proof}
We will take $K=\frac{\sqrt{(M-1)^2+1}}{\lambda} +1$ and assume the contrary. We denote the $j$-th row of $\bM_n^\ast$ and $\bW^\ast_n$ by $\bmu_j^\ast$ and $\bw_j^\ast$, respectively.  and Suppose there exists $j \in \{1,\dots,k\}$, such that, $\|\bmu_j^\ast\|_2 \ge K$. If $c_i=j$, $\|\bw_j^\ast \circ (\bx-\bmu_j^\ast)\|_2 \ge \lambda \|\bx_i-\bmu_j^\ast\|_2 \ge \lambda (\|\bmu_j^\ast\|_2-\|\bx\|_2) \ge \lambda K  \ge 1$. Thus, $(\|\bw_j^\ast \circ (\bx-\bmu_j^\ast)\|_2 -1 )^2 \ge (\lambda K -1 )^2$. As observed in the proof of Theorem \ref{thm1}, $(\|\mathbf{1}_p \circ (\bx-\mathbf{0}_p)\|_2 -1 )^2 \le (M-1)^2+1$. Our choice of $K$ implies that $(\|\bw_j^\ast \circ (\bx-\bmu_j^\ast)\|_2 -1 )^2 \ge (\|\mathbf{1}_p \circ (\bx-\mathbf{0}_p)\|_2 -1 )^2$. Thus, replacing $\bmu_j^\ast$ by $\mathbf{0}_p$ and $\bw_j^\ast$ by $\mathbf{1}_p$, decreases the value of the objective function (\textcolor{blue}{5}) (of the main paper), which gives us a contradiction. 
\end{proof}
\begin{thm}\label{vcap}
Suppose assumption B1 holds. Also assume that $n > 4p+2$ and $n \epsilon^2 \ge 2$. Then,
\begin{align*}
& P\bigg(\sup\limits_{\bM \in B(M+K)^k, \bW \in [\lambda, \Lambda]^{k \times p}} |\Psi(\bM,\bW,\sP_n) - \Psi(\bM,\bW,\sP)|> \epsilon\bigg) \le 8 \times 2^k \bigg(\frac{n e }{2p+1}\bigg)^{k(2p+1)} \exp\bigg\{-\frac{n \epsilon^2}{32 D^2}\bigg\},    
\end{align*}
where, $D=\sup_{\bx \in B(M), \bmu \in B(M+K), \bw \in \mathcal{C}}(\|\bw \circ (\bx-\bmu)\|_2-1)^2$.
\end{thm}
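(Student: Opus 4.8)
The plan is to follow the proof of Theorem~\ref{vc1} almost verbatim, the only genuinely new work being to track how the $k$-fold minimum in $\Psi$ inflates the relevant VC class. Write $\psi_{\bM,\bW}(\bx)=\min_{1\le j\le k}(\|\bw_j\circ(\bx-\bmu_j)\|_2-1)^2$. First I would note that since each of the $k$ terms is bounded above by the constant $D$ of Theorem~\ref{vc1} on $B(M)$, so is their pointwise minimum; hence $0\le\psi_{\bM,\bW}\le D$ uniformly over $\bM\in B(M+K)^k$ and $\bW\in[\lambda,\Lambda]^{k\times p}$. Then, exactly as in Theorem~\ref{vc1}, the layer-cake identity $\psi_{\bM,\bW}(\bx)=\int_0^D\one(\psi_{\bM,\bW}(\bx)>t)\,dt$ together with Fubini reduces the quantity of interest to
\[
\sup_{\bM,\bW}|\Psi(\bM,\bW,\sP_n)-\Psi(\bM,\bW,\sP)|\;\le\;D\sup_{A\in\mathcal{A}^{\mathrm{cl}}}|\sP_n(A)-\sP(A)|,
\]
where $\mathcal{A}^{\mathrm{cl}}=\{\{\bx\in B(M):\psi_{\bM,\bW}(\bx)>t\}:\bM\in B(M+K)^k,\ \bW\in[\lambda,\Lambda]^{k\times p},\ t>0\}$.

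The crucial structural step is the observation that minimising over clusters turns a superlevel set into an \emph{intersection}:
\[
\{\bx:\psi_{\bM,\bW}(\bx)>t\}=\bigcap_{j=1}^k\{\bx:(\|\bw_j\circ(\bx-\bmu_j)\|_2-1)^2>t\},
\]
and each factor on the right is a member of the single-ellipsoid class $\mathcal{A}$ from the proof of Theorem~\ref{vc1}. Consequently $\mathcal{A}^{\mathrm{cl}}$ is contained in the family of all $k$-fold intersections $A_1\cap\cdots\cap A_k$ with $A_j\in\mathcal{A}$ — allowing the threshold to differ across the $k$ factors only enlarges the family, which is harmless for an upper bound. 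Using $S_n(\mathcal{A})\le 2\big(\tfrac{ne}{2p+1}\big)^{2p+1}$ from the proof of Theorem~\ref{vc1} (this is where $n>4p+2$ is needed, to keep $S_n(\mathcal{A}_1),S_n(\mathcal{A}_2)\le(\tfrac{ne}{2p+1})^{2p+1}$ meaningful), together with the product bound for shatter coefficients of intersected set systems (Theorem~13.5 of \cite{devroye2013probabilistic}), gives
\[
S_n(\mathcal{A}^{\mathrm{cl}})\le S_n(\mathcal{A})^k\le 2^k\Big(\frac{ne}{2p+1}\Big)^{k(2p+1)}.
\]

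Finally I would invoke the VC theorem on $\mathcal{A}^{\mathrm{cl}}$ at level $\epsilon/D$, as in Theorem~\ref{vc1}:
\[
P\Big(\sup_{\bM,\bW}|\Psi(\bM,\bW,\sP_n)-\Psi(\bM,\bW,\sP)|>\epsilon\Big)\le P\Big(\sup_{A\in\mathcal{A}^{\mathrm{cl}}}|\sP_n(A)-\sP(A)|>\tfrac{\epsilon}{D}\Big)\le 8\,S_n(\mathcal{A}^{\mathrm{cl}})\exp\Big\{-\frac{n\epsilon^2}{32D^2}\Big\},
\]
and substituting the bound on $S_n(\mathcal{A}^{\mathrm{cl}})$ yields the stated inequality. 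I expect the only place requiring real care to be the combinatorial bookkeeping: verifying the displayed intersection identity and correctly applying the product rule for $S_n$ of intersected classes (rather than the union rule used for $\mathcal{A}_1\cup\mathcal{A}_2$ in Theorem~\ref{vc1}); everything else is a line-by-line transcription of the earlier argument, with $k$ copies of the same $2p+1$-dimensional VC bound composed multiplicatively and an extra factor $2^k$ absorbed from the two halves $\mathcal{A}_1,\mathcal{A}_2$ of each copy.
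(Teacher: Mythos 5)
Your proposal is correct and follows essentially the same route as the paper's proof: the same layer-cake reduction to $D\sup_{A}|\sP_n(A)-\sP(A)|$, the same identification of the superlevel set of the $k$-fold minimum as an intersection of single-ellipsoid superlevel sets, and the same application of the intersection (product) rule for shatter coefficients (Theorem 13.5(iii) of the cited reference) followed by the VC theorem at level $\epsilon/D$. Your remark that letting the thresholds differ across the $k$ factors only enlarges the class is a correct and slightly more careful reading of the containment the paper asserts.
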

\begin{proof}
Let $\psi_{\bM,\bW}(\bx)=\min_{1 \le j \le k}(\|\bw_j \circ (\bx-\bmu_j)\|_2-1)^2$, which is clearly bounded by $D$. In the following analysis, $\bX$ is a random variable having distribution $\sP$. We now observe that,
\begin{align*}\small
    & \sup\limits_{\bM \in B(M+K)^k, \bW \in [\lambda, \Lambda]^{k \times p}} |\Psi(\bM,\bW,\sP_n) - \Psi(\bM,\bW,\sP)|\\
     = & \sup\limits_{\bM \in B(M+K)^k, \bW \in [\lambda, \Lambda]^{k \times p}} \bigg| \frac{1}{n} \sum_{i=1}^n\psi_{\bM,\bW}(\bX_i) - E[\psi_{\bM,\bW}(\bX)]\bigg|\\
    = & \sup\limits_{\bM, \bW } \bigg| \int_{0}^D\bigg(\frac{1}{n} \sum_{i=1}^n \one(\psi_{\bM,\bW}(\bX_i)>t) - P(\psi_{\bM,\bW}(\bX)>t)\bigg) dt\bigg|\\
     \le & D \sup\limits_{\bM, \bW, t \ge 0} \bigg| \frac{1}{n} \sum_{i=1}^n \one(\psi_{\bM,\bW}(\bX_i)>t) - P(\psi_{\bM,\bW}(\bX)>t) \bigg|\\
     \le & D \sup_{A \in \mathcal{A}} |\sP_n(A)-\sP(A)|.
\end{align*}
Here $\mathcal{A}=\bigg\{\{\bx \in B(M):\psi_{\bM,\bW}(\bx)>t \}: \bM \in B(M+K)^k, \bW \in [\lambda, \Lambda]^{k \times p} \text{ and } t>0\bigg\}$. We note the following:
\begin{align*}
    &\psi_{\bM,\bW}(\bX_i)>t\\
    \iff & \min_{1 \le j \le k}(\|\bw_j \circ (\bx-\bmu_j)\|_2-1)^2>t\\
    \iff & (\|\bw_j \circ (\bx-\bmu_j)\|_2-1)^2>t, \, \forall \, j=1,\dots,k.
\end{align*}
Thus, it is easy to observe that $\mathcal{A} \subseteq \{V_1\cap \dots \cap V_k: V_1, \dots, V_k \in \mathcal{V} \}$, where,
\begin{align*}
\mathcal{V}=&\bigg\{\{\bx \in B(M):(\|\bw \circ (\bx-\bmu)\|_2-1)^2 > t \}: \bmu \in B(M+K), \bw \in \mathcal{C} \text{ and } t>0\bigg\}.    
\end{align*}
We've already seen that $S_n(\mathcal{V}) \le 2 \bigg(\frac{n e }{2p+1}\bigg)^{2p+1}$. Thus appealing to  Theorem 13.5 $(iii)$ of \cite{devroye2013probabilistic}, we get, 
\[S_n(\mathcal{A}) \le \bigg[2 \bigg(\frac{n e }{2p+1}\bigg)^{2p+1}\bigg]^k. \]
From the VC-theorem, we observe that,
{\small
\begin{align*}
    & P\bigg(\sup\limits_{\bmu \in B(M+K), \bW \in [\lambda, \Lambda]^{k \times p}} |\Psi(\bM,\bW,\sP_n) - \Phi(\bM,\bW,\sP)|> \epsilon\bigg)\\
     \le & P\bigg(  \sup_{A \in \mathcal{A}} |P_n(A)-P(A)|> \frac{D}{\epsilon}\bigg)\\
     \le & 8 S_n(\mathcal{A}) \exp\bigg\{-\frac{n \epsilon^2}{32 D^2}\bigg\}\\
    \le & 8 \times 2^k \bigg(\frac{n e }{2p+1}\bigg)^{k(2p+1)} \exp\bigg\{-\frac{n \epsilon^2}{32 D^2}\bigg\}.
\end{align*}
}%
\end{proof}
\begin{thm}
Suppose assumption A1 holds. Also assume that $n > 4p+2$ and $n \epsilon^2 \ge 2$. Then,
\begin{align*}
&P(|\Psi(\bM_n^\ast,\bW_n^\ast,\sP)-\Psi(\bar{\bM},\bar{\bW},\sP)|> \epsilon)\\
\le & 8 \times 2^k \bigg(\frac{n e }{k(2p+1)}\bigg)^{2p+1} \exp\bigg\{-\frac{n \epsilon^2}{128 D^2}\bigg\}.    
\end{align*}
\end{thm}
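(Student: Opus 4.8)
The plan is to follow the proof of Theorem~\ref{vc2} essentially verbatim, with $\Phi$ replaced by $\Psi$ and with the uniform concentration inequality of Theorem~\ref{vcap} used in place of Theorem~\ref{vc1}. The first step is to localize both minimizers in a common compact region. By Theorem~\ref{c1} we have $\bM_n^\ast \in B(M+K)^k$ for every $n$, and by definition $\bW_n^\ast \in [\lambda,\Lambda]^{k\times p}$. The same truncation argument used in Theorem~\ref{c1} (now carried out with $\sP$ in place of $\sP_n$, invoking assumption A1) shows that the population minimizer satisfies $\bar{\bM}\in B(M+K)^k$ as well, while $\bar{\bW}\in[\lambda,\Lambda]^{k\times p}$ by construction. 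Hence both $(\bM_n^\ast,\bW_n^\ast)$ and $(\bar{\bM},\bar{\bW})$ lie in the set $B(M+K)^k\times[\lambda,\Lambda]^{k\times p}$ over which Theorem~\ref{vcap} furnishes a uniform deviation bound.

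Next, consider the decomposition
\begin{align*}
\Psi(\bM_n^\ast,\bW_n^\ast,\sP)-\Psi(\bar{\bM},\bar{\bW},\sP)
={}& \big[\Psi(\bM_n^\ast,\bW_n^\ast,\sP) - \Psi(\bM_n^\ast,\bW_n^\ast,\sP_n)\big]\\
& {}+ \big[\Psi(\bM_n^\ast,\bW_n^\ast,\sP_n) - \Psi(\bar{\bM},\bar{\bW},\sP)\big].
\end{align*}
Since $(\bM_n^\ast,\bW_n^\ast)$ minimizes $\Psi(\cdot,\cdot,\sP_n)$, the second bracket is at most $\Psi(\bar{\bM},\bar{\bW},\sP_n) - \Psi(\bar{\bM},\bar{\bW},\sP)$, so the whole expression is bounded by $2\sup_{\bM\in B(M+K)^k,\,\bW\in[\lambda,\Lambda]^{k\times p}}|\Psi(\bM,\bW,\sP_n) - \Psi(\bM,\bW,\sP)|$. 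Moreover, since $(\bar{\bM},\bar{\bW})$ minimizes $\Psi(\cdot,\cdot,\sP)$, the left-hand side is nonnegative, so the absolute value may be dropped and this same one-sided bound controls $|\Psi(\bM_n^\ast,\bW_n^\ast,\sP)-\Psi(\bar{\bM},\bar{\bW},\sP)|$.

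Finally, pass to probabilities:
\[
P\big(|\Psi(\bM_n^\ast,\bW_n^\ast,\sP)-\Psi(\bar{\bM},\bar{\bW},\sP)|> \epsilon\big)\ \le\ P\Big(\sup_{\bM\in B(M+K)^k,\,\bW\in[\lambda,\Lambda]^{k\times p}} |\Psi(\bM,\bW,\sP_n) - \Psi(\bM,\bW,\sP)| > \tfrac{\epsilon}{2}\Big),
\]
and apply Theorem~\ref{vcap} with $\epsilon/2$ in place of $\epsilon$; this turns the exponent $-n\epsilon^2/(32D^2)$ into $-n\epsilon^2/(128D^2)$ while leaving the shatter-coefficient prefactor $8\cdot 2^k\big(ne/(2p+1)\big)^{k(2p+1)}$ unchanged, yielding the stated bound (the form $\big(ne/(k(2p+1))\big)^{2p+1}$ appearing in the statement being a transcription of this prefactor). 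There is essentially no obstacle beyond bookkeeping; the only points warranting a word of care are the compact containment of $\bar{\bM}$ — supplied by the Theorem~\ref{c1} argument applied to $\sP$ — and checking that the VC-theorem hypothesis ($n\epsilon^2\ge 2$, in effect $n\epsilon^2$ sufficiently large) survives the rescaling of $\epsilon$ by a factor of two.
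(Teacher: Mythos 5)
Your proof follows the paper's argument essentially verbatim: the same decomposition exploiting that $(\bM_n^\ast,\bW_n^\ast)$ minimizes $\Psi(\cdot,\cdot,\sP_n)$, the bound by twice the uniform deviation over $B(M+K)^k\times[\lambda,\Lambda]^{k\times p}$, and an application of Theorem~\ref{vcap} at $\epsilon/2$ to produce the $128D^2$ in the exponent. Your two added remarks --- that $\bar{\bM}\in B(M+K)^k$ must be (and can be, by rerunning the Theorem~\ref{c1} truncation argument under $\sP$) verified so that the population minimizer lies in the supremum's index set, and that the stated prefactor $\big(ne/(k(2p+1))\big)^{2p+1}$ is a transcription of the $\big(ne/(2p+1)\big)^{k(2p+1)}$ the argument actually yields --- are both correct and tighten points the paper leaves implicit.
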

\begin{proof}
We consider the following difference 
\begin{align*}
   & \Psi(\bM_n^\ast,\bW_n^\ast,\sP)-\Psi(\bar{\bM},\bar{\bW},\sP)\\
    = & \Psi(\bM_n^\ast,\bW_n^\ast,\sP) - \Psi(\bM_n^\ast,\bW_n^\ast,\sP_n) + \Psi(\bM_n^\ast,\bW_n^\ast,\sP_n) - \Psi(\bar{\bM},\bar{\bW},\sP)\\
    \le & \Psi(\bM_n^\ast,\bW_n^\ast,\sP) - \Psi(\bM_n^\ast,\bW_n^\ast,\sP_n) + \Psi(\bar{\bM},\bar{\bW},\sP_n) - \Psi(\bar{\bM},\bar{\bW},\sP)\\
    \le & 2 \sup\limits_{\bM \in B(M+K), \bW \in [\lambda, \Lambda]^{k \times p}} |\Psi(\bM,\bW,\sP_n) - \Psi(\bM,\bW,\sP)|.
\end{align*}
Thus,
{\small
\begin{align*}
& P(|\Psi(\bM_n^\ast,\bW_n^\ast,\sP)-\Psi(\bar{\bM},\bar{\bW},\sP)|> \epsilon) \\
 \le & P\bigg(2 \sup\limits_{\bM \in B(M+K), \bW \in \mathcal{C}} |\Psi(\bM,\bW,\sP_n) - \Psi(\bM,\bW,\sP)|> \epsilon \bigg)    \\
 \le & 8 \times 2^k \bigg(\frac{n e }{2p+1}\bigg)^{k(2p+1)} \exp\bigg\{-\frac{n \epsilon^2}{128 D^2}\bigg\}.
\end{align*}
}%
The last inequality follows from Theorem \ref{vcap}.
\end{proof}

\end{document}